\documentclass[11pt]{elsarticle}
\usepackage{geometry}
\geometry{
	a4paper,
	left=30mm,
	top=30mm,
}

\usepackage{setspace}
\setstretch{1.5}
\usepackage{array}
\renewcommand{\arraystretch}{1.2}
\usepackage{lineno,hyperref}
\modulolinenumbers[5]
\usepackage{subcaption}
\usepackage{float}
\usepackage{caption}
\usepackage{diagbox} 

\usepackage{booktabs}
\newcommand{\ra}[1]{\renewcommand{\arraystretch}{#1}}
\usepackage{changepage}
\usepackage{comment}
\usepackage{multirow}
\usepackage{longtable}
\usepackage{soul}

\makeatletter
\def\ps@pprintTitle{%
	\let\@oddhead\@empty
	\let\@evenhead\@empty
	\let\@evenfoot\@oddfoot
}
\makeatother

\journal{ }







\bibliographystyle{elsarticle-style/model5-names}\biboptions{authoryear}




\usepackage{amsmath}
\usepackage{amsfonts}
\usepackage{amsthm}
\usepackage{amssymb}
\usepackage{mathrsfs}
\usepackage{graphicx}
\usepackage{adjustbox}

\newtheorem{proposition}{Proposition}[section]

\usepackage{tikz}
\usetikzlibrary{bayesnet}
\usepackage{xcolor}
\definecolor{dark_green}{rgb}{0.0, 0.5, 0.0}


\DeclareMathOperator{\KL}{KL}

\DeclareMathOperator{\softmax}{softmax}

\DeclareMathOperator{\diag}{diag}

\begin{document}

\setlength{\aboverulesep}{0pt}
\setlength{\belowrulesep}{0pt}
\begin{frontmatter}

\title{Embedded Topics in the Stochastic Block Model}

\author[UP]{Rémi Boutin}
\ead{remi.boutin.stat@gmail.com}

\author[Nice]{Charles Bouveyron}
\author[UCA,UP]{Pierre Latouche}

\address[UP]{Université Paris Cité, CNRS, Laboratoire MAP5, UMR 8145, Paris, France}
\address[Nice]{Université Côte d'azur, INRIA, MAASAI Team, Sophia-Antipolis, France}
\address[UCA]{Université Clermont Auvergne, CNRS, Laboratoire LMBP, UMR 6620, Aubière, France}

\begin{abstract}
	Communication networks such as emails or social networks are now ubiquitous and their analysis has become a strategic field. In many applications, the goal is to automatically extract relevant information by looking at the nodes and their connections. Unfortunately, most of the existing methods focus on analysing the presence or absence of edges and textual data is often discarded. However, all communication networks actually come with textual data on the edges. In order to take into account this specificity, we consider in this paper networks for which two nodes are linked if and only if they share textual data. We introduce a deep latent variable model allowing embedded topics to be handled called ETSBM to simultaneously perform clustering on the nodes while modelling the topics used between the different clusters. ETSBM extends both the stochastic block model (SBM) and the embedded topic model (ETM) which are core models for studying networks and corpora, respectively. The inference is done using a variational-Bayes expectation-maximisation algorithm combined with a stochastic gradient descent. The methodology is evaluated on synthetic data and on a real world dataset.
\end{abstract}

\begin{keyword}
Graph clustering \sep topic modelling \sep variational inference \sep generative model \sep probabilistic model, embedded topic model, stochastic block model
\end{keyword}

\end{frontmatter}

\newpage


\section{Introduction}
Many real life interactions induce the exchange of texts, as in co-authorship networks, social networks or emails for instance. Since the storage capacity keeps increasing, networks with textual data on the edges become even more frequent. In order to make such networks, called communication networks, intelligible to humans, it is of great interest to gather information about the texts exchanged between the nodes and to summarise the connectivity structure. While those two questions have been studied independently, the work we propose aims at bridging the gap between the two by modelling the joint distribution of texts and edges. To the best of our knowledge, the interest on making the two disciplines of topic modelling, when texts are present on the edges, and model-based graph clustering meets is recent and the methods that have been proposed only rely on the frequency of word within the documents without incorporating semantic meaning. In this paper, we propose to take advantage of pre-trained word embeddings in the topic-model as presented in \cite{dieng2020topic} in order to incorporate semantic meaning of the words and to obtain topic-meaningful clusters.

\section{Related work}
Both the topic modelling methods and the graph clustering techniques have first emerged as deterministic optimisation problems to progressively incorporate uncertainty which led to many developments in the statistical literature. The next part provides a brief summary of the advancements in those domains.

\subsection{Probabilistic models for topic modelling}
 The statistical analysis of topics has emerged in the late 90s with \cite{Papadimitriou98latentsemantic}, developing statistical results for the latent semantic indexing (LSI), first proposed by \cite{deerwester1990indexing}. LSI relies on a spectral analysis of the ``term frequency - inverse document frequency" and successfully captures synonymy between words. To overcome the lack of probabilistic foundations of LSI, \cite{Hofmann1999plsi} introduced the probabilistic latent semantic index (pLSI) which models each word distribution as a mixture model such that each mixture component corresponds to a ``topic". The topic membership of each word is modelled by a multinomial random variable in pLSI. Even though the topic membership of the words depends on the document, a major drawback of this model is the absence of model at the document level. This was overcome by \cite{blei2003latent} with the latent Dirichlet allocation (LDA) which for each document uses a Dirichlet random variable to model the proportion of each topic. However, the Dirichlet distribution makes the topics almost uncorrelated and does not directly model correlation. \cite{blei2006correlated} then proposed to use a normal-logistic prior instead of a Dirichlet prior on the topic proportion to directly model the correlations. All these models require to derive the equations for any new generative model. In \cite{Srivastava2017AutoencodingVI}, they bridged the gap between topic modelling and autoencoders, taking full advantage of gradient descent for those models. Nevertheless, all the former approaches do not incorporate semantic meaning to the words. Indeed, since the model is only based on the document term-frequency matrix, they loose the information provided by the order of the words. In the embedded topic model (ETM), \cite{dieng2020topic} used the strength of word embeddings, such as the continuous bag of words (CBOW) or skipgram \citep{mikolov2013efficient} as a part of the decoder of a variational autoencoder (VAE). The topics are also embedded into the same vector space which allows to easily measure similarities between words and topics. The optimisation is done using gradient descent, as proposed in \cite{rezende2014stochastic} or \cite{kingma2014autoencoding}. For a review of the former methods relying exclusively on the document term frequency matrix, the reader may refer to \cite{vayansky2020review}.

\subsection{Probabilistic models for graph analysis} Statistical network analysis first started with random graph theory, initiated by \cite{erdos1960evolution}. They studied probabilistic properties of graphs with binary connections, and a unique probability for any connection to exist. However, real life datasets do not show such regularity. Therefore, more complex and realistic  graph structures have been considered. Here, a structure designates a partition of the nodes such that nodes in a cluster present a homogeneous connectivity pattern. For example, a community is a group of nodes highly connected one to another but with few connections to the rest of the graph. If the graph is only composed of communities, reordering the adjacency matrix by group would output a block matrix. Another direction emerged with \cite{fienberg1981categorical} who first introduced a probabilistic model that assumes that the probability for two nodes to be connected only depends on the group to which they belong to and applied it to Sampson's monastery dataset \citep{sampson1969crisis}. Introducing a latent representation of the nodes then became popular thanks to the latent position cluster model \citep{handcock2007model} or the stochastic block model (SBM) \citep{Wang1987StochasticBF, nowicki2001estimation}. Many extensions have been developed to incorporate valued edges, as in  \cite{mariadassou2010uncovering}, as well as categorical edges in \cite{jernite2014random} or to add prior information in \cite{zanghi2010vertexfeatures}. Some developments also focused on looking for overlapping clusters \citep{airoldi2008mixed, latouche2011overlapping} as well as dynamic networks \citep{matias2017statistical, zreik2017dynamic, corneli2016block}. The inference of SBM-based model is often done either using Markov chain monte carlo (MCMC), variational expectation maximisation (VEM) as in \cite{daudin2008mixture} or variational Bayes expectation maximisation (VBEM) as in \cite{latouche2012variational}. The classification can either be deduced from the latent variable distribution or be incorporated in the optimisation strategy with a hard clustering, for instance using the classification variational expectation maximisation (CVEM) algorithm \citep{bouveyron2018stochastic}. The choice of the number of cluster $K$ can either be done through a model selection criterion \citep{daudin2008mixture, latouche2012variational}, through a greedy search \citep{come2015} or through a non parametric schemes \citep{kemp2006learning}. Fore more insights about SBM developments, see \cite{lee2019review}. For reviews on statistical network modelling, we also relate to \cite{goldenberg2010survey} and \cite{matias2014modeling}.

\subsection{Probabilistic models for the joint analysis of texts and networks}
The rise of data combining networks with texts, such as emails, social networks or co-authors articles led to developing methods using both the network and the textual information. In that regard,  \cite{zhou2006probabilistic} proposed the community-user topic model (CUT). This model relies on the author-topic model (AT) \citep{Rosenzvi2004author} and adds a latent variable to the bayesian hierarchical model for modelling the communities. Two versions are proposed in the paper, CUT1 hypothesises that a community is entirely defined as a group of users while CUT2 makes the assumption that a community is defined as a set of topics. This model is inferred using a Gibbs sampler to approximate the joint distribution of the communities, topics and users. Eventually, the community-author-recipient-topic (CART) model introduced in \cite{Pathak08socialtopic} makes use of communities both at the document generation level and at the author and recipient generation level which corresponds to the network generation.  However, the high number of parameters combined with the inference based on a Gibbs sampler does not allow to scale this model to large datasets. The topic-link LDA presented in \cite{liu2009topic} also offers a joint-analysis of texts and links in a unified framework by conditioning the generation of a link on both the topics within the documents and the community of authors. The inference relies on a variational EM approach which allows to scale to large datasets but this method only deals with undirected networks. Finally, the topic-user-community models (TUCM) was introduced in \cite{sachan2012using} and was able to discover topic-meaningful communities. The main feature of this model was its capacity to incorporate different types of interactions, well-suited for social networks applications (tweets, retweets, messages, comments, ...). The inference relies on Gibbs sampling which can be limiting when dealing with large datasets. The stochastic topic block model (STBM) presented in \cite{bouveyron2018stochastic} was the first model to handle the simultaneous clustering of nodes and edges while keeping the inference tractable to large dataset thanks to a variational classification EM based inference. This model was extended in \cite{berge2019latent} for the simultaneous clustering of rows (observations) and columns (variables). It was also adapted for dynamic networks in \cite{corneli2019dynamic}. Unfortunately, those models only rely on word counts and cannot use the position of words within a sentence or any form of context information.

\subsection{Our contribution}

In this paper, we propose a new methodology called the embedded topics in the stochastic block model (ETSBM), to look for node partitions incorporating the connectivity patterns as well as the topics exchanged between the nodes. We will reserve the term \textit{community}  to groups of nodes that are densely connected together but poorly connected to the rest of the graph. In the block model literature, the term \textit{cluster} denotes a group a nodes that share a similar connectivity pattern which goes beyond the concept of community. For instance, contrary to communities, a star pattern is defined by two clusters with low intra-connection and large inter-connection probabilities \citep{latouche2012variational}. Such pattern is particularly common in social networks. This type of cluster cannot be retrieved by community detection methods. In this paper, we will also assume that the nodes of a same cluster share a similar use of topics proportions. To find clusters complying with this definition, \textbf{(i)} we propose a generative model assuming that each node belongs to a cluster and that the probability of connection between two nodes, as well as the topic proportions of a document, only depend on the clusters of the corresponding nodes. Figure \ref{fig:scenario_C} illustrates the necessity to combine graph clustering and topic modelling in order to distinguish all four clusters and to obtain more meaningful topics for each cluster. \textbf{(ii)} To model the topics exchanged between the nodes, the documents are encoded with a deep neural network to benefit from their flexibility. \textbf{(iii)} The decoder is made of word and topic embeddings, as in \cite{dieng2020topic}. \textbf{(iv)} In this work, the documents are aggregated at the cluster level, into $Q^2$ meta-documents with $Q$ the number of clusters. The meta-documents are obtained by weighting each document with the cluster membership probabilities of the corresponding nodes. In particular, our inference strategy is able to directly optimise the construction of the meta-documents through the inference procedure.

\begin{figure}[htp]
	\centering
	\includegraphics[width=.3\textwidth]{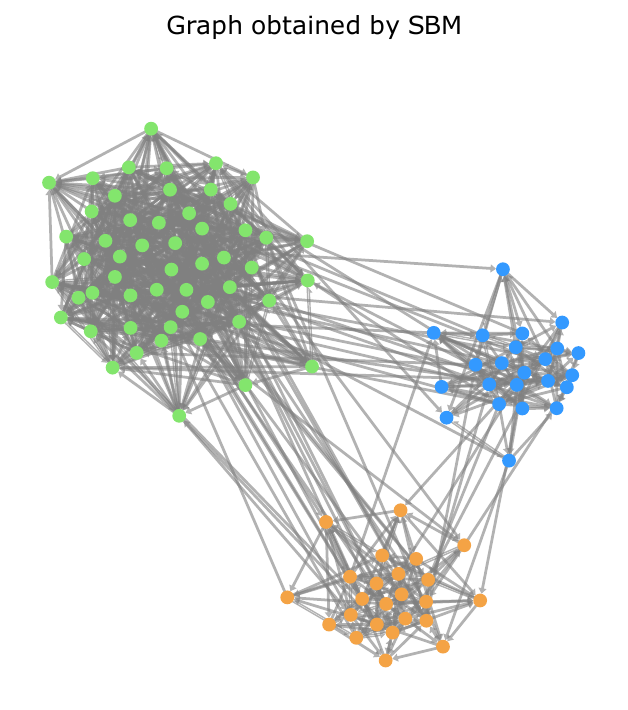}\hfill
	\includegraphics[width=.3\textwidth]{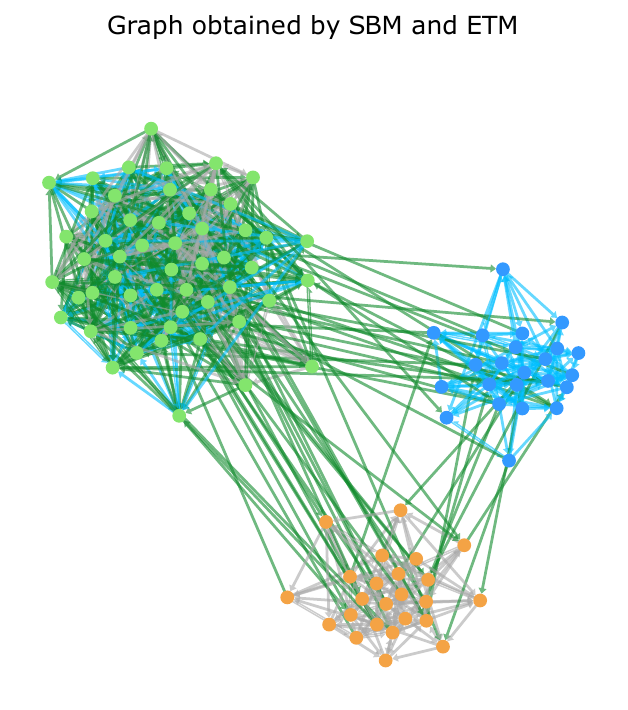}\hfill
	\includegraphics[width=.3\textwidth]{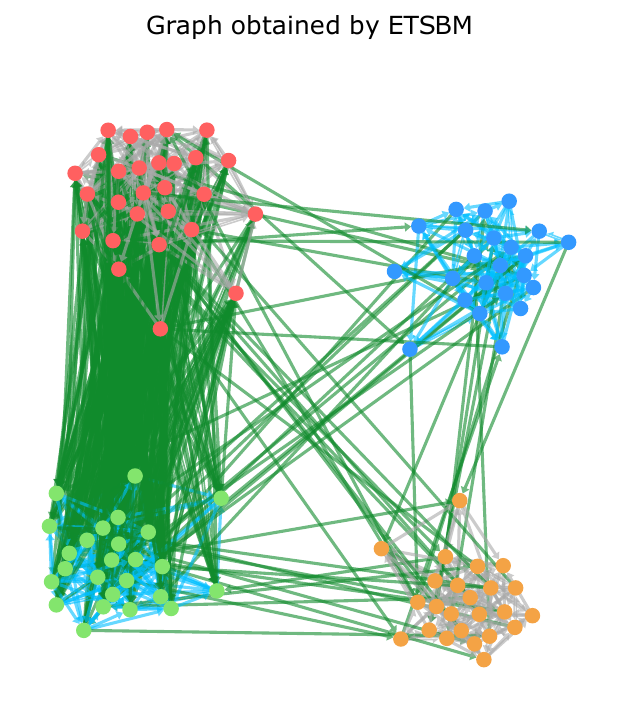}
	
	\caption{Comparison of results on a simulated network with the use of SBM on the left, SBM and ETM in the middle and ETSBM on the right. The colours of the nodes indicate the cluster of the vertices.  The colours of the edges indicate the most-used topic in the corresponding documents. Note that SBM alone does not provide edge information. Thus, the left network only has a single edge colour. On the left hand side, SBM clustering results uncover 3 clusters. Again, in the middle, SBM is used and uncovers 3 clusters of nodes. ETM edge information is added to the network through the 3 edge colours green, grey and blue. On the right hand side, ETSBM clustering results uncover 4 clusters. The cluster coloured in green, in the middle of the figure, is split into two cluster on the right hand side, the green one and the red one, each discussing of a different topic, the blue and grey topic respectively. The clusters of nodes of the figure on the right-hand side are coherent both in terms of topology and topics of discussion contrary to the figure in the middle.}
	\label{fig:scenario_C}	
\end{figure} 

\paragraph{Organisation of the paper} 

The embedded topics for the stochastic block model (ETSBM) is presented in Section \ref{sec:generative_model}.  The inference and the model selection are presented in Section \ref{sec:inference}. Eventually, the model is evaluated against state of the art algorithms on synthetic data and we present results for a real word example build from tweets during the last French presidential election in Sections \ref{sec:numerical_experiment} and \ref{sec:real_world_data}, respectively. Section \ref{sec:conc_disc} presents some concluding remarks and further work.

\section{The ETSBM Model}\label{sec:generative_model}
\subsection{Background and notations}
In this work, we focus on data represented by a directed graph $\mathcal{G}=\{\mathcal{V}, \mathcal{E}\}$, such that $\mathcal{V} = \{1,\dots, M\}$ denotes the set of nodes and $\mathcal{E} := \{ (i,j) : i,j \in \{1,\dots, M \}, i \rightsquigarrow j \}$ the set of edges, where $ i \rightsquigarrow j $ indicates that $i$ is connected to $j$. The connections, or edges, are represented by a binary matrix $A \in \mathcal{M}_{M \times M}( \{0,1\})$ such that $i$ is connected to $j$, or $(i,j) \in \mathcal{E}$, if and only if $A_{ij}=1$. In the applications we consider, this implies that node $i$ sent textual information to $j$ such as one or a series of emails for instance. These texts are denoted $W_{ij} = \{ W_{ij}^{1}, \dots, W_{ij}^{D_{ij}} \}$ with $D_{ij}$ the number of documents sent from $i$ to $j$ and are gathered in the collection $ W = \left\{W_{ij} , (i,j) \in \mathcal{E} \right\}$. Each document $d$ in $W_{ij}$ is a collection of words of size $N_{ij}^d$, i.e $W_{ij}^d = \{ w_{ij}^{d1},\dots, w_{ij}^{d N_{ij}^{d} } \}$. The size of the vocabulary is denoted $V$ and the words are identified by their index in the vocabulary: each word $w$ is in $\{ 1, \dots, V\}$.
Finally, only graphs without self loops are considered in this work, therefore $A_{ii} =0$ for all $i \in \mathcal{V}$. Notice that all the present work can be extended to undirected networks using $W_{ij} = W_{ji}$ for all pairs $(i,j)$ such that $A_{ij} = A_{ji} = 1$. The directed case is more adequate to messages sent from $i$ to $j$ while the undirected case is better suited for co-authorships networks for instance.

The notation $\mathcal{M}_{d \times p}(\mathbb{F})$ will be used to denote the matrix space with matrix of dimension $d \times p$ and coefficients in $\mathbb{F}$ while the notation $\mathcal{M}_{d}(N, \omega)$ will be used to denote the multinomial distribution with parameters $N \in \mathbb{N}$ and $\omega \in \Delta_{d - 1}$ where 
\begin{align*}
\Delta_{d - 1}~=:~\left\{ x \in \mathbb{R}^d : \forall i \in \{1, \dots,d\},  x_i \geq 0, \sum_{i=1}^d x_i = 1 \right\}.
\end{align*}

\subsection{Modelling the interactions}
 In this work, we assume that each node belongs to a single cluster. Moreover, we assume that the connexion probability between two nodes only depends on the cluster memberships. Indeed, let $Y_i$ denotes the cluster membership of node $i$ for any $i \in \{1, \dots, M \}$. All $Y_i$ are assumed to follow a multinomial distribution and to be independent and identically distributed (\textit{i.i.d}), given the cluster proportions $\gamma \in \Delta_{Q-1}$, lying in the simplex of dimension $Q$,
\begin{align*}
	Y_i \mid \gamma \overset{\text{i.i.d}}{\sim} \mathcal{M}_{Q}(1, \gamma).
\end{align*}
Thus, each node $i$ is associated with cluster $q$ with probability $\gamma_q$. Then, we define the cluster membership matrix $Y$ by stacking the node cluster membership vectors $(Y_i)_i$ together such that $Y = (Y_1 \cdots Y_M)^{\top} \in \mathcal{M}_{M \times Q}(\{0,1\})$. The probability of $Y$ is given by
\begin{align}\label{eq:cluster_memb_proba}
	p(Y \mid \gamma) & =\prod_{i=1}^M \prod_{q=1}^Q \gamma_{q}^{Y_{iq}}.
\end{align}
Besides, the connections between nodes are supposed to be independent given their cluster memberships. Moreover, if nodes $i$ and $j$ are respectively in clusters $q$ and $r$, an edge is assumed to be present with probability $\pi_{qr}$,
\begin{align}
	A_{ij} \mid Y_{iq}Y_{jr} = 1, \pi_{qr}  \overset{\text{i.i.d}}{\sim} \mathcal{B}(\pi_{qr}),
\end{align}
where $\mathcal{B}(\mu)$ denotes the Bernoulli distribution with probability $\mu$. Thus, given the cluster memberships of the nodes $Y$ and the probability matrix $\pi$, the probability of all node connections is given by 
\begin{align}\label{eq:adjacency_proba}
	p(A \mid Y, \pi) & = \prod_{i\neq j}^M \prod_{q,r}^Q \left(\pi_{qr}^{A_{ij} } (1-\pi_{qr})^{(1- A_{ij}) } \right)^{Y_{iq} Y_{jr}}. 
\end{align}
Eventually, the joint-probability of the adjacency matrix $A$, and the cluster memberships vector $Y$,  is obtained by multiplying Equations \eqref{eq:cluster_memb_proba} and \eqref{eq:adjacency_proba},
\begin{align}\label{eq:network_distr}
	p(A, Y \mid \pi, \gamma) & = p(A \mid Y, \pi) 	p(Y \mid \gamma).
\end{align}
Combining Equations \eqref{eq:cluster_memb_proba}, \eqref{eq:adjacency_proba}, and \eqref{eq:network_distr}, we retrieve the SBM distribution \citep{daudin2008mixture}.

\subsection{Modelling the texts}
Our approach extends ETM to capture information of groups of texts.  Essentially, texts are assumed to be generated according to a mixture of topics with latent topic vectors only depending on node clusters. More precisely, a text sent from node $i$ in cluster $q$ to node $j$ in cluster $r$ is assumed to have a logistic-normal topic proportion vector $\theta_{qr} = (\theta_{qr1} , \dots, \theta_{qrK})^{\top} \in \Delta_{K-1}$, with the number of topics $K$ fixed beforehand. It is obtained by applying the softmax function to a Gaussian random vector $\delta_{qr}$,
\begin{align*}
	& \delta_{qr}  \sim \mathcal{N}(0_K, I_{K}), \nonumber\\
	& \theta_{qr}  = \softmax(\delta_{qr}),
\end{align*}
where $\softmax(x)~=~\left(\sum_{k=1}^K e^{x_{k}}\right)^{-1} \left(e^{x_{1}}, \dots, e^{x_{K}} \right)^{\top}$.

In the rest of this paper, the notation $\theta = (\theta_{qr})_{1\leq q,r \leq Q}$ is used to refer to the topic proportions while $\delta = (\delta_{qr})_{1\leq q,r \leq Q}$ will refer to the sampling of the random variable. If two nodes $i$ and $j$ are connected and if they are respectively in cluster $q$ and $r$, the words in document $W_{ij}$ are assumed to be \textit{i.i.d}. Indeed, the $n$-th word of the $d$-th documents is assumed to be distributed according a mixture of topics conditionally on the node clusters,
\begin{align}
	W_{ij}^{dn} \mid Y_{iq} Y_{jr} A_{ij} = 1, \theta_{qr}, \alpha, \rho \sim \mathcal{M}_{V}(1, \theta_{qr}^{\top} \beta ), 
\end{align}
where the matrix $\beta= (\beta_1 \cdots \beta_K)^{\top} \in \mathcal{M}_{K \times V}(\mathbb{R})$ corresponds to the distribution over the vocabulary for each topic such that $\beta_k = \softmax \left( \rho^{\top} \alpha_k \right)$ for any $k \in \{1, \dots, K \}$. The matrix $\rho \in \mathcal{M}_{L \times V}(\mathbb{R})$ corresponds to the matrix of the vocabulary embedded into an $L$-dimensional vector space, and $\alpha=(\alpha_1 \cdots \alpha_K) \in \mathcal{M}_{L \times K}(\mathbb{R})$ the matrix of topics represented into the same vector space.

Therefore, the probability of texts can be computed as follow:
\begin{align}\label{eq:proba_texts}
	p(W \mid Y, A, \theta, \alpha, \rho) & =  \prod_{i\neq j}^M \prod_{d=1}^{D_{ij}} p(W_{ij}^{} \mid Y_i , Y_j, A_{ij}=1, \theta, \alpha, \rho) \nonumber \\
	& = \prod_{i\neq j}^M \prod_{d=1}^{D_{ij}} \prod_{n=1}^{N_{ij}^d} \prod_{q,r}^{Q} \prod_{v=1}^V \Bigl( \sum_{k=1}^K \theta_{qrk} \beta_{kv} \Bigr)^{W_{ij}^{dnv} A_{ij}Y_{iq}Y_{jr}} \nonumber\\
	& = \prod_{q,r}^{Q} \prod_{v=1}^V  \Bigl( \sum_{k=1}^K \theta_{qrk} \beta_{kv} \Bigr)^{ W_{qr}^v}.	
\end{align}
The number of time the word $v$ of the dictionary is used in texts sent from cluster $q$ to cluster $r$ is denoted $W_{qr}^v~=~\sum_{i\neq j}^M \sum_{d=1}^{D_{ij}} \sum_{n=1}^{N_{ij}^d}  W_{ij}^{dnv} A_{ij}Y_{iq}Y_{jr}$. Here, $W_{qr}~=~(W_{qr}^1, \dots, W_{qr}^V)^{\top}~\in~\mathbb{N}^{V}$  shall be designated as meta-document $(q, r)$. Moreover, we shall use the bag of words notations such that for any connected pair of nodes $(i,j) \in \mathcal{E}$, $W_{ij}~=~( W_{ij}^1, \dots,W_{ij}^V)^{\top}~\in~\mathbb{N}^{V}$ with for any $v \in \{1, \dots, V \}$, $W_{ij}^v$ represents the total count of word $v$ for all documents sent from $i$ to $j$. 
The model is represented in Figure \ref{graphical_model}.

\begin{figure}
	\begin{center}
		\begin{tikzpicture}[scale=1.5, transform shape]

			\node[obs] (W) {$W$};
			\node[latent, above=0.8 of W] (theta) {$\theta$};
			\node[const, below = 0.2 of W, xshift= 1cm] (rho) {$\rho$};
			\node[const, above = 0.2 of W, xshift= 1cm] (alpha) {$\alpha$};
			
			\node[obs, left=1 of W] (A) {$A$};
			\node[latent, above=0.8 of A] (Y) {$Y$};
			\node[latent, left = 0.6 of A] (pi) {$\pi$};
			\node[latent, above = 0.8 of pi] (gamma) {$\gamma$};
			\node[const, below = 0.2 of pi, xshift= -1cm] (a) {$a$};
			\node[const, above = 0.2 of pi, xshift= -1cm] (b) {$b$};
			\node[const, left = 0.50 of gamma] (gamma0) {$\gamma_0$};  
			\edge {pi} {A};
			\edge {gamma} {Y};
			\edge {Y} {A};
			\edge {Y} {W};
			\edge {theta} {W};
			\edge {alpha} {W};
			\edge {rho} {W};
			\edge {A} {W};
			\edge {gamma0} {gamma};
			\edge {a} {pi};
			\edge {b} {pi};
			
			\plate {} { %
				(A)(Y) %
			} {};
			
			\plate {} { (W)(theta) } {};
		\end{tikzpicture}
	\end{center}
	\vspace*{-0.8cm}
	\caption{Graphical representation of the model.}
	\label{graphical_model}
\end{figure}

\subsection{Distribution of the model and links with SBM and ETM.}
Given a cluster configuration $Y$, the joint probability of the model is obtained using Equations \eqref{eq:adjacency_proba} and \eqref{eq:proba_texts} 
\begin{align}\label{eq:joint_distribution}
	p(A, W \mid Y, \alpha, \rho) = p(W \mid Y, A, \alpha, \rho) p(A \mid Y, \pi).
\end{align}

At this point, we emphasise that meta-documents between pairs of clusters of nodes are constructed using the cluster memberships $Y$ and the node connections $A$. Assuming that the cluster membership $Y$ is available as well as all the network information holded by $\pi$ and $\gamma$, the model we propose would simply correspond to ETM applied on the meta-documents $(W_{qr})_{1\leq q,r, \leq Q}$, computed with the available $Y$. 
 
On the other hand, if no texts are exchanged between nodes or the texts are not available, the distribution would reduce to the second term of Equation \ref{eq:joint_distribution}. In that case, the conditional distribution of a standard SBM \citep{daudin2006} is recovered. It is also worth noticing that if a Dirichlet prior is assumed on the topic proportion instead of a logistic-normal, and no factorisation in a embedded latent space is considered, the model corresponds to STBM. By construction, ETSBM generalises SBM and ETM to incorporate both textual data and network information.
	  
\section{Inference} \label{sec:inference}
This section presents the Bayesian framework considered for inference. It also describes the variational-bayes EM algorithm used to maximise the integrated joint likelihood.
\subsection{Bayesian framework for the graph modelling part}
First, a Dirichlet distribution is assumed as a prior distribution on the proportions $\gamma$ of nodes in each cluster,
\begin{align}
	\gamma \sim \mathcal{D}ir_{Q}( \gamma_0 ).
\end{align}
where $\gamma_0$ is set to $(1,\dots, 1) \in \mathbb{R}^Q$, which corresponds to a uniform prior on the simplex.
Moreover, each coefficient of the probability matrix $\pi \in \mathcal{M}_{Q \times Q}(\mathbb{R})$, is assumed to be sampled from from a Beta distribution, such that for any pair $(q,r) \in \{1,\dots,Q\}^2$,
\begin{align*}
	\pi_{qr} \overset{\text{i.i.d}}{\sim} \mathcal{B}eta(a,b).
\end{align*}
In particular, $a$ and $b$ are set to $1$. Thus, the Beta prior corresponds to a Uniform distribution between $0$ and $1$.
 
\subsection{Variational inference}
Eventually, the integrated joint log-likelihood is given by:
\begin{align}\label{eq:log_likelihood}
	\log  p(A, W \mid \alpha, \rho)  = \log \left( \sum_{Y} \int_{\delta} \int_{\gamma} \int_{\pi} p(A, W, Y, \pi, \gamma, \delta \mid \alpha, \rho) d\pi d\delta d\gamma \right).
\end{align}
Unfortunately, this quantity is intractable since it requires computing it for the $Q^M$ configurations of $Y$, which is naturally computationally too demanding. Moreover, the integral with respect to $\delta$ is not tractable either because of the $\softmax$ function. Thus, it cannot be optimised directly. However, it is possible to overcome this issue using a variational-bayes expectation-maximisation algorithm (VBEM) \cite{attias1999variational}. This comes handy as it makes the inference scalable to large datasets.

The variational approach consists in splitting Equation \eqref{eq:log_likelihood} in two terms using a surrogate distribution on $Y, \pi, \gamma$ and $\delta$, denoted $R(Y, \pi, \gamma, \delta)$.

\begin{proposition}\label{prop:elbo_decomposition}
	Denoting $R(\cdot)$, a distribution on $Y, \pi, \gamma$ and $\delta$, the integrated joint log-likelihood can be decomposed as follow: 
	\begin{align*}
		\log  p(A, W \mid \alpha, \rho) & =  \mathscr{L} (R(\cdot); \alpha, \rho ) + \KL(R(\cdot) || p(Y, \pi, \gamma, \delta \mid A, W, \alpha, \rho)),
	\end{align*}
	where
	\begin{align*}
		\mathscr{L} (R(\cdot); \alpha, \rho ) = \sum_{Y} \int_{\pi, \gamma, \theta} R(Y, \pi, \gamma, \delta) \log \frac{ p(A, W, Y, \pi, \gamma, \delta \mid \alpha, \rho) }{ R(Y, \pi, \gamma, \delta)} d\pi d \delta d \gamma.
	\end{align*}
\end{proposition}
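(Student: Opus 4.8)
The plan is to establish this as the standard ``evidence lower bound'' (ELBO) decomposition, which is a completely general identity valid for any choice of the surrogate distribution $R(\cdot)$ with support contained in that of the posterior. First I would start from the right-hand side and expand the Kullback--Leibler term explicitly. Writing $Z = (Y, \pi, \gamma, \delta)$ for brevity, we have by definition
\begin{align*}
\KL\bigl(R(Z) \,\|\, p(Z \mid A, W, \alpha, \rho)\bigr) = \sum_{Y} \int_{\pi, \gamma, \delta} R(Z) \log \frac{R(Z)}{p(Z \mid A, W, \alpha, \rho)}\, d\pi\, d\delta\, d\gamma.
\end{align*}

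The key step is then to apply Bayes' rule to the denominator inside the logarithm: since
\begin{align*}
p(Z \mid A, W, \alpha, \rho) = \frac{p(A, W, Z \mid \alpha, \rho)}{p(A, W \mid \alpha, \rho)},
\end{align*}
substituting this in and using $\log(1/x) = -\log x$ together with $\log(ab) = \log a + \log b$ splits the KL integrand into three pieces: a term involving $\log R(Z)$, a term involving $-\log p(A, W, Z \mid \alpha, \rho)$, and a term involving $+\log p(A, W \mid \alpha, \rho)$. The last of these does not depend on $Z$, so it factors out of the integral and sum, leaving $\log p(A, W \mid \alpha, \rho)$ multiplied by $\sum_Y \int R(Z)\, d\pi\, d\delta\, d\gamma = 1$ since $R$ is a probability distribution. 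Collecting the remaining two terms, one recognises exactly $-\mathscr{L}(R(\cdot); \alpha, \rho)$ as defined in the statement. Rearranging gives $\log p(A, W \mid \alpha, \rho) = \mathscr{L}(R(\cdot); \alpha, \rho) + \KL(R(\cdot) \,\|\, p(Z \mid A, W, \alpha, \rho))$, which is the claimed decomposition.

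There is no genuine obstacle here — the argument is a routine manipulation of logarithms and the normalisation of $R$. The only points requiring a word of care are the interchange of summation over $Y$ with integration over $(\pi, \gamma, \delta)$, which is justified by Tonelli/Fubini since the integrands are the product of a probability density with a logarithm of a ratio of densities (and one typically assumes the relevant integrability, or equivalently that the KL divergence is finite, so that the decomposition is meaningful), and the implicit assumption that $R(\cdot)$ is absolutely continuous with respect to the true posterior so that the ratio inside the logarithm is well defined $R$-almost everywhere. I would state these mild regularity conditions and then present the chain of equalities above, concluding by noting that since $\KL \geq 0$ always, an immediate corollary is that $\mathscr{L}(R(\cdot); \alpha, \rho)$ is a lower bound on the integrated log-likelihood, with equality if and only if $R(\cdot)$ equals the true posterior — which motivates maximising $\mathscr{L}$ over a tractable family of surrogate distributions in the subsequent VBEM algorithm.
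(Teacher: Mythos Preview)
Your proof is correct and follows essentially the same approach as the paper: both rely on Bayes' rule to rewrite the posterior, multiply and divide by $R$, and use the normalisation $\sum_Y\int R = 1$ to isolate the log-likelihood. The only cosmetic difference is that the paper starts from the left-hand side (writing $\log p(A,W\mid\alpha,\rho) = \mathbb{E}_R[\log p(A,W\mid\alpha,\rho)]$ and then applying Bayes' rule inside the expectation), whereas you start from the KL term on the right-hand side and rearrange; the underlying manipulation is identical.
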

\begin{proof}
 The proof is provided in \ref{appendix:elbo_decomp}.
\end{proof}

 To make $\mathscr{L}(R(\cdot); \alpha, \rho)$ tractable, we use the following mean-field assumption :
\begin{align}\label{eq:mean_field_assumption}
R(Y, \pi, \gamma, \delta)=R(Y) R(\pi) R(\gamma) R\left( \delta \right) .
\end{align}
Following the optimality results of \cite{latouche2012variational}, we impose the following variational distributions: 
\begin{align}\label{eq:var_distr}
	R(Y) & = \prod_{i=1}^M R(Y_i)=  \prod_{i=1}^M \mathcal{M}_{Q} (Y_i ; 1, \tau_i),  \nonumber\\
	R(\pi) & = \prod_{q,r = 1}^Q  R(\pi_{qr}) = \prod_{q,r = 1}^Q \mathcal{B}eta(\pi_{qr}; \tilde{\pi}_{qr1}, \tilde{\pi}_{qr2} ), \nonumber \\
	R(\gamma) & = \mathcal{D}ir_{Q}(\gamma; \tilde{\gamma}).
\end{align}
Each vector $\tau_i$ is of size $Q$ and encodes the (approximate) posterior probabilities for node $i$ to be in each cluster.
Given $\tau = (\tau_i)_i$, the set of posterior cluster membership probabilities, for any pair $(q,r)$ the corresponding expected meta-document can be computed as 
\begin{align}\label{eq:expected_meta_docs}
	\tilde{W}_{qr} = \sum_{i\neq j } \tau_{iq} \tau_{jr} W_{ij}.
\end{align}
By construction, the $v$-th element of vector $\tilde{W}_{qr} $ is the expected pseudo count of word $v$ for all documents sent from nodes in cluster $q$ to nodes in cluster $r$. 
Finally, the variational distribution on latent topic proportions is assumed to be:
\begin{align}\label{eq:variational_distr_delta}
	R(\delta) = \prod_{q,r = 1}^Q R(\delta_{qr}) =  \prod_{q,r = 1}^Q \mathcal{N}\bigl(\delta_{qr}; \mu_{qr}(\tau, \nu), \diag(\sigma_{qr}^2(\tau, \nu)) \bigr),
\end{align}
with $(\mu_{qr}(\tau, \nu), \sigma_{qr}(\tau, \nu) )^{\top} = f(\tilde{W}_{qr}^{norm}(\tau) ; \nu)$ the output of a parametric function, typically a (deep) neural network, with parameters denoted $\nu$. Hereafter, the ETM encoder will be used as the function $f$  parametrised by $\nu$. The normalised expected meta-documents  $\tilde{W}_{qr}^{norm}(\tau) = \left(\sum_{v=1}^V \tilde{W}_{qr}^v (\tau)\right)^{-1} \tilde{W}_{qr}(\tau) \in \mathbb{R}^V$ are then given to the encoder which outputs the mean and variance vectors  $(\mu_{qr}(\tau, \nu), \sigma_{qr}(\tau, \nu) )^{\top}$ of the posterior distribution. Our inference strategy is inspired by \cite{dieng2020topic} and finds its roots in the original work of  \cite{kingma2014autoencoding} for classical data. However, as we shall see, a critical property of our methodology is that the (approximate) posterior allocation probabilities $\tau$ will change through the updates and so are the inputs of the encoder. In all experiments we carried out, we used a 3-layer architecture with 800 units for the hidden layers, as originally proposed in \cite{dieng2020topic}. In order not to increase the number of parameters $\nu$ linearly with the number of pairs of groups, amortised inference is used as advocated in \cite{gershman2014amortized} or \cite{kingma2014autoencoding}.

\begin{proposition}\label{prop:elbo_parametric_function}
	Using the assumptions describes in Equations \eqref{eq:mean_field_assumption}, \eqref{eq:var_distr} and \eqref{eq:variational_distr_delta}, the ELBO, which is a functional of the variational distribution, reduces to a function of the variational parameters and can be split in two terms associated with the network and with the texts respectively:
	\begin{align}
	\mathscr{L}(R(\cdot); \alpha, \rho ) & = \mathscr{L}( \tau, \tilde{\pi}_{1}, \tilde{\pi}_{2}, \tilde{\gamma}, \nu ; \alpha, \rho ) \\
	& = \mathscr{L}^{net}(\tau, \tilde{\pi}_{1}, \tilde{\pi}_{2}, \tilde{\gamma} ; \alpha, \rho ) + \mathscr{L}^{texts}(\tau, \nu ; \alpha, \rho ),	
\end{align}
where $\tilde{\pi}_{1} =(\tilde{\pi}_{qr1})_{qr}$, $ \tilde{\pi}_{2} = (\tilde{\pi}_{qr2})_{qr}$.
\end{proposition}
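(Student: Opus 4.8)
The plan is to start from the ELBO expression in Proposition \ref{prop:elbo_decomposition} and substitute the factorised variational distribution \eqref{eq:mean_field_assumption} together with the explicit families \eqref{eq:var_distr} and \eqref{eq:variational_distr_delta}. First I would expand the joint distribution $p(A, W, Y, \pi, \gamma, \delta \mid \alpha, \rho)$ into its factors using the graphical model in Figure \ref{graphical_model}: namely $p(A \mid Y, \pi)$, $p(W \mid Y, A, \theta, \alpha, \rho)$ with $\theta = \softmax(\delta)$, $p(Y \mid \gamma)$, $p(\gamma \mid \gamma_0)$, $p(\pi \mid a, b)$, and $p(\delta)$. Taking the logarithm turns the product into a sum, and by linearity of the expectation under $R(\cdot)$ the ELBO splits into a sum of expectations, one for each factor, plus the entropy terms $-\mathbb{E}_R[\log R(Y)] - \mathbb{E}_R[\log R(\pi)] - \mathbb{E}_R[\log R(\gamma)] - \mathbb{E}_R[\log R(\delta)]$.

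Next I would group these terms. The terms involving $p(A \mid Y, \pi)$, $p(Y \mid \gamma)$, $p(\gamma \mid \gamma_0)$, $p(\pi \mid a,b)$ and the entropies of $R(Y)$, $R(\pi)$, $R(\gamma)$ collect into $\mathscr{L}^{net}$; crucially, since $R(Y_i) = \mathcal{M}_Q(Y_i; 1, \tau_i)$ and $R(\pi)$, $R(\gamma)$ are conjugate, these expectations are closed-form functions of $\tau$, $\tilde{\pi}_1$, $\tilde{\pi}_2$, $\tilde{\gamma}$ only — the standard VBEM-for-SBM computation of \cite{latouche2012variational}. The remaining terms, $\mathbb{E}_R[\log p(W \mid Y, A, \theta, \alpha, \rho)]$, $\mathbb{E}_R[\log p(\delta)]$ and $-\mathbb{E}_R[\log R(\delta)]$, collect into $\mathscr{L}^{texts}$. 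Here I would use Equation \eqref{eq:proba_texts}, which already rewrites $\log p(W \mid \cdots)$ in terms of the word counts $W_{qr}^v$; taking the expectation over $R(Y)$ replaces $W_{qr}^v$ by the expected pseudo-counts $\tilde{W}_{qr}^v$ of Equation \eqref{eq:expected_meta_docs} (since $\mathbb{E}_R[Y_{iq}Y_{jr}] = \tau_{iq}\tau_{jr}$ by independence of the $R(Y_i)$), leaving a dependence on $\tau$ and, through the expectation over $R(\delta)$ whose parameters are $f(\tilde{W}_{qr}^{norm}(\tau); \nu)$, on $\nu$. The $p(\delta)$ and $R(\delta)$ terms combine into a KL divergence between two Gaussians, again a function of $\mu_{qr}(\tau,\nu)$ and $\sigma_{qr}(\tau,\nu)$ and hence of $(\tau, \nu)$.

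The main obstacle is the term $\mathbb{E}_{R(\delta)}[\log(\sum_k \theta_{qrk}\beta_{kv})]$ inside $\mathscr{L}^{texts}$: because of the $\softmax$, this expectation has no closed form, so strictly speaking it does not "reduce to a function of the variational parameters" in an elementary sense. I would handle this exactly as the ETM literature does — by invoking the reparameterisation trick of \cite{kingma2014autoencoding, rezende2014stochastic}: writing $\delta_{qr} = \mu_{qr}(\tau,\nu) + \sigma_{qr}(\tau,\nu) \odot \epsilon$ with $\epsilon \sim \mathcal{N}(0_K, I_K)$ expresses the expectation as an integral against a fixed base measure whose integrand depends only on $(\tau,\nu,\alpha,\rho)$, so the quantity is well-defined as a (deterministic, if intractable) function of the variational parameters and can be Monte-Carlo estimated in the inference algorithm. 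The remaining work is purely bookkeeping: collecting the $\tau$-dependent pieces consistently across the two groups and checking that no cross-terms link the network block to the texts block except through the shared $\tau$, which follows directly from the mean-field factorisation \eqref{eq:mean_field_assumption}. I would relegate the detailed algebra to an appendix and state the explicit forms of $\mathscr{L}^{net}$ and $\mathscr{L}^{texts}$ there.
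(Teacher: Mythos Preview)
Your proposal is correct and follows essentially the same route as the paper's own proof: factor the joint according to the graphical model, split the resulting expectations into a network block (the standard VBEM-for-SBM terms in $\tau,\tilde\pi_1,\tilde\pi_2,\tilde\gamma$) and a text block (the ETM-style reconstruction term plus the Gaussian KL in $\tau,\nu$), and handle the intractable expectation under $R(\delta)$ via the reparameterisation trick with a Monte-Carlo estimate. The only cosmetic difference is that the paper keeps the text term written at the node-pair level as $\sum_{i\neq j}\sum_{q,r} A_{ij}\tau_{iq}\tau_{jr}\,\mathbb{E}_R[T_{ij}^{\delta_{qr}}]$ rather than immediately collapsing to the expected meta-documents $\tilde W_{qr}$, but the two expressions are identical after rearranging the sums.
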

\begin{proof}
	The proof and the exact value of the ELBO is detailed in \ref{prop:elbo_detail_proof}
\end{proof}

\subsection{Optimisation and Algorithm}

We now aim at maximising the ELBO with respect to the variational parameters $\tilde{\pi}, \tilde{\gamma}, \tau$ and $\nu$ and to the parameters $\rho$ and $\alpha$. On the one hand, following \cite{latouche2012variational}, the variational parameters $\tilde{\pi}$ and $\tilde{\gamma}$ only depend on $\tau$ and are updated as follow:
\begin{align}
	\tilde{\gamma_{q}} & = \gamma_{0q} + \sum_{i=1}^M \tau_{iq}  \nonumber \\
	\tilde{\pi}_{qr1} & = \pi^0_{qr1} + \sum_{i \neq j}^M \tau_{iq} \tau_{jr} X_{ij}, \ \ 
	\tilde{\pi}_{qr2} = \pi^0_{qr2} + \sum_{i \neq j}^M \tau_{iq} \tau_{jr} (1-X_{ij}).
\end{align} 

On the other hand, $\nu$, as well as $\rho$ and $\alpha$ are optimised by a stochastic gradient descent algorithm using Pytorch automatic differentiation \citep{pytorch2019nips} and the Adam optimiser \citep{kingma2014adam} with a learning rate of $10^{-4}$. Once both parts are done, we only need to update $\tau$ using the already up-to-date parameters. To do so, we switch from $\tau$ lying on the simplex $\Delta_{Q-1}$ to the unconstrained space $\mathbb{R}^{Q-1}$ using for any $i \in \mathcal{V}$ and $ q \in \{1, \dots, Q-1\}$:
\begin{align*}
	\xi_{iq} & = \ln(\tau_{iq}) - \ln(\tau_{iQ}).
\end{align*}
We then use the automatic differentiation and the Adam optimiser with a learning rate of $0.55$ to maximise the ELBO with respect to $\xi$. It is worth emphasising that the ELBO is optimised over the whole set of allocation probability vectors $\tau = (\tau_i)_i$ contrary to STBM which looks for a hard allocation of nodes to clusters, one allocation being optimised at a time, all the others being fixed. Moreover, by optimising the entry of the encoder through  $\tau$, thus looking for an optimal allocation of documents to pairs of clusters, the moves in $\tau$ aim at uncovering the optimal direction in the posterior distribution in $(\theta_{qr})_{qr}$ maximising the ELBO. In that regard, ETSBM has links with the quasi branching bound algorithm of \cite{jouvin2021greedy}  for document clustering. Considering a unique core for illustration, on an Intel(R) Core(TM) i7-10875H 2.30 GHz CPU and a Nvidia GeForce RTX 2080 Super 8 Go GPU, it takes about 15 seconds to analyse a dataset with 100 nodes and 1 000 documents. Moreover, studying a dataset with more than 200 000 documents, characterising all the connections between 1500 nodes,  is done in approximately 6 minutes. In practice, we emphasise that the running time can be reduced even more by considering extensive parallelisation as well as stochastic variational inference strategies adapted for networks as in \cite{gopalan2013efficient}. The Python implementation of the complete methodology we propose is available at \href{https://plmlab.math.cnrs.fr/rboutin/etsbm_package}{https://plmlab.math.cnrs.fr/rboutin/etsbm\_package}.

\subsection{Model selection}\label{sec:model_selection}

Finally, the selection of the number of cluster $Q$ is performed using the ELBO. It is useful to remind that the aim of the model is to select the number of clusters providing the more meaning. Therefore, relying on \cite{latouche2012variational}, we take advantage of the Bayesian framework that automatically penalises the complexity of the model with respect to $Q$. The best number of cluster $Q$ is then selected by estimating the parameters for models with different number of cluster $Q$ and keeping the one with the highest ELBO. Our experiment Section \ref{table:selec_mod_sc_C} confirms that this procedure provides a relevant model selection criterion. In this paper, the number of topics $K$ is not selected. Indeed, we choose to keep a high $K$ as advocated in \cite{dieng2020topic}. In practice, once the inference of the topics is done, a classical approach consists in focusing  the interpretation on the results associated with the most frequent topics. As we shall see, in the experiment section, provided that the value of $K$ chosen is large enough, the proposed procedure provides an accurate estimate of $Q$.

\section{Numerical experiments} \label{sec:numerical_experiment}
In this section, a series of experiments is presented to assess the proposed methodology. First, three scenarii used for benchmarking are described. Second, an illustration of the results provided by ETSBM on a simulated dataset from one of the scenarii is given. Then, results from experiments to evaluate the model selection criterion on the three scenarii considered are brought. Moreover, various strategies to initialise ETSBM are compared. Finally, an extensive set of experiments on the three scenarii with three levels of difficulty is carried out to evaluate the clustering performances of ETSBM against competitive algorithms.

\subsection{Simulation setup}\label{sec:simulation_setup}

The networks with textual edges are generated following three scenarii $A$, $B$, $C$, as originally introduced in  \cite{bouveyron2018stochastic}.

\paragraph{Sampling networks with textual edges}\label{seq:simulation_scenarii}
\begin{itemize}
	\item Scenario $A$ is composed of three communities, each defining a cluster, and four topics. By definition, a community is defined such that more connections are present between nodes of the same community. For each cluster, a specific topic is employed to sample all the documents associated with the corresponding intra-cluster connections. Besides, an extra topic is considered to model documents exchanged between nodes from different clusters. Thus, by construction, the clustering structure can be retrieved either using the network or the texts only.
	\item Scenario $B$ is made of a single community and three topics. Thus, all nodes connect with the same probability. Then, the community is split into two clusters with their respective topics. An extra topic is used to model documents exchanged between the two clusters. Therefore, in such a scenario, the network itself is not sufficient to find the two clusters but the documents are.
	\item Scenario $C$ is composed of three communities and three topics. Two of the communities are associated with their respective topics, say $t_1$ and $t_2$. Moreover, following the previous scenario, the third community is split in two clusters, one being associated with topic $t_1$ and the other with $t_2$. Thus, considering both texts and topology, each network is actually made of four node clusters. Fundamentally, both textual data and the network itself are necessary to uncover the clusters. This scenario will be of major interest in this experiment section since it allows to ensure that the two sources of information are correctly used to retrieve partitions.
\end{itemize}
The edges holding the documents are constructed by sampling words from four BBC articles, focusing each on a given topic. The first topic deals with the UK monarchy, the second with cancer treatments, and the third with the political landscape in the UK. The last topic deals with astronomy. In the general setting, for all scenarii, the average text length for the documents is set to 150 words.
 The parameters used to sample data from the three scenarii are given in Table \ref{tab:summary_scenarii}. Moreover,
three examples of networks generated from $A$, $B$ and $C$ are presented in Figure \ref{fig:example_sc}.

\paragraph{Clustering performance evaluation} The main criterion used in the following to evaluate the clustering performances of the different strategies is the adjusted random index (ARI). ARI measures how close two partitions are from one another. The closer ARI is to 1, the better the results are. A random cluster assignment leads to an ARI of 0, while a perfect retrieval of the cluster memberships gives an ARI of 1.

\hspace{-2cm}\begin{table}[ht]
	\begin{tabular}{|>{\centering}p{0.4\textwidth}|ccc|}
		\hline
		 & Scenario $A$ & Scenario $B$ & Scenario $C$\\
		\hline
		$Q$ (clusters) & 3  & 2  & 4 \\
		\hline
		$K$ (topics) & 4 & 3 & 3 \\
		\hline
		Communities & 3 & 1 & 3 \\ 
		\hline
		{$\pi_{qr}$ (connection probabilities)\\ $\eta=0.25$, $\epsilon=0.01$ } & $ 
		\begin{pmatrix}
			\eta & \epsilon  & \epsilon \\
			\epsilon & \eta & \epsilon  \\
			\epsilon & \epsilon & \eta
		\end{pmatrix} $ & 
		 $\begin{pmatrix}
			\eta & \eta  \\
			\eta & \eta  
		\end{pmatrix} $ &
		$ 
		\begin{pmatrix}
			\eta & \epsilon  & \epsilon & \epsilon \\
			\epsilon & \eta & \epsilon & \epsilon  \\
			\epsilon & \epsilon & \eta & \eta \\
			\epsilon & \epsilon & \eta & \eta
		\end{pmatrix} $
	    \\
		\hline
		Topics between pairs of clusters $(q, r)$ & $ 
		\begin{pmatrix}
			t_1 & t_4  & t_4 \\
			t_4 & t_2 & t_4  \\
			t_4 & t_4 & t_3
		\end{pmatrix} $ & 
		$\begin{pmatrix}
			t_1 & t_3  \\
			t_3 & t_2  
		\end{pmatrix} $ &
		$ 
		\begin{pmatrix}
			t_1 & t_3  & t_3 & t_3 \\
			t_3 & t_2 & t_3 & t_3  \\
			t_3 & t_3 & t_1 & t_3 \\
			t_3 & t_3 & t_3 & t_2
		\end{pmatrix} $
		\\ 
		\hline
		Sufficient information to uncover the clusters & Network & Topics & Network \& Topics\\
		\hline
	\end{tabular}
\caption{Detail of the three simulation scenarii to evaluate our model.}
\label{tab:summary_scenarii}
\end{table}

\begin{figure}
	\begin{subfigure}{0.32\textwidth}
		\includegraphics[width=\textwidth]{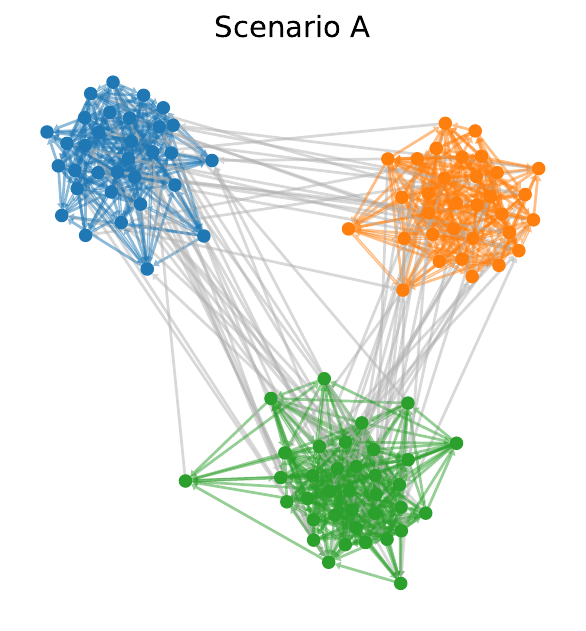}
	\end{subfigure}
	\hfill
	\begin{subfigure}{0.32\textwidth}
		\includegraphics[width=\textwidth]{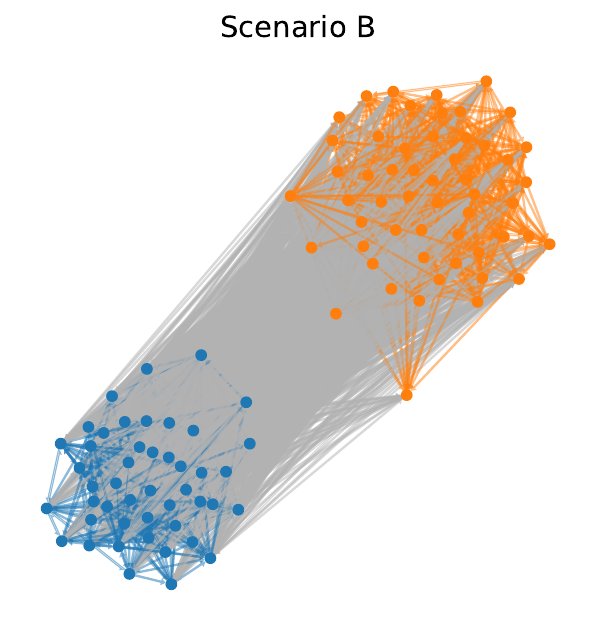}
	\end{subfigure}
	\hfill
	\begin{subfigure}{0.32\textwidth}
		\includegraphics[width=\textwidth]{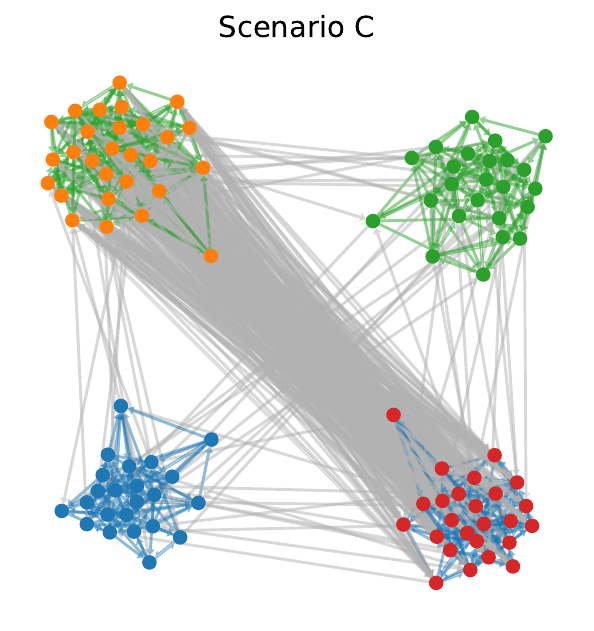}
	\end{subfigure}
	\caption{An example of each scenario is presented. The node colours denote the cluster memberships and the edge colours denote the most-used topic within the corresponding documents. The Scenarii $A$, $B$ and $C$ are composed of 3, 1 and 3 communities respectively.}
	\label{fig:example_sc}
\end{figure}

\paragraph{Different levels of difficulties}\label{level_diff}

To evaluate ETSBM against state of the art STBM in Sections \ref{seq:initialization} and \ref{seq:benchmark}, two levels of difficulty are introduced. The first one, named \textit{Hard 1}, makes it particularly hard to distinguish connectivity patterns by using an intra-cluster connectivity probability of 0.2. In Table \ref{tab:summary_scenarii}, it corresponds to $\epsilon = 0.2$ instead of $0.01$. The second one, named \textit{Hard 2}, introduces difficulty on the text part by using smaller texts of $110$ words on average instead of $150$ and by adding noise. 
 In our case, this translates into fixing:
\begin{align}
	\theta_{qr} = (1-\zeta) \theta_{qr}^{\star} + \zeta * \left(\frac{1}{K}, \dots, \frac{1}{K}\right)^{\top},
\end{align}
with $\zeta = 0.7$. Thus, for each pair of clusters $(q,r)$, the texts are sampled according to a mixture between a multinomial distribution with probability 1 on the corresponding topic and a uniform distribution over all topics considered. Finally, the intra-cluster connection probability is decreased from $0.2$ to $\eta = 0.1$.

\subsection{An introductory example}
A first glimpse at the ETSBM results on a single network simulated with Scenario $C$ is presented here. In Figure \ref{fig:sc_C_elbo_ari}, the evolution of the ELBO and ARI values are monitored at each iteration of the inference of ETSM applied on this single simulated network.  
\begin{figure}[ht]
\centering
\includegraphics[width=\textwidth]{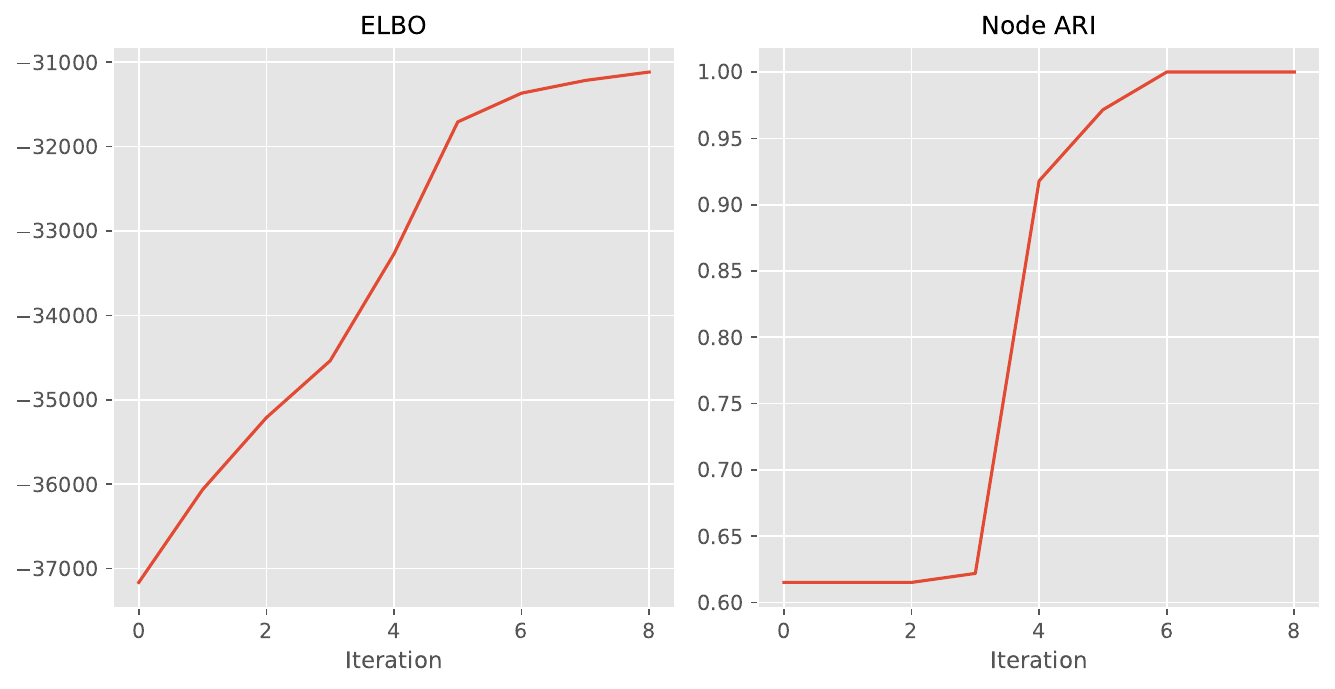}
\caption{Evolution of ETSBM ELBO and ARI (y-axis) at each iteration (x-axis) on the Scenario $C$ after an initialisation with the K-means algorithm.}
\label{fig:sc_C_elbo_ari}
\end{figure}
As we can see, both the ELBO and the ARI increase after each iteration. In particular, starting from the clustering initialisation with an ARI value of $0.62$, the algorithm converges to a value of $1$, characterising a perfect cluster recovery. This figure illustrates the ability of the methodology proposed to retrieve the true node partition, by combining the textual and network data.
 
In addition, Figure \ref{fig:sc_C_pi_gamma} provides representations for the expected posterior estimates $\hat{\pi}$ and $\hat{\gamma}$ computed as follows $\hat{\pi}_{qr} =  \tilde{\pi}_{qr1} / (\tilde{\pi}_{qr1} + \tilde{\pi}_{qr2})$ and $\hat{\gamma}_q = \tilde{\gamma}_q / (\sum_{r=1}^Q \tilde{\gamma}_r)$. We emphasise that the matrix characterises the connexion probabilities between clusters with a $10^{-2}$ rounding. It matches the expected connectivity structure described in Table \ref{tab:summary_scenarii}. 

\begin{figure}[ht]
	\centering
	\includegraphics[width=\textwidth]{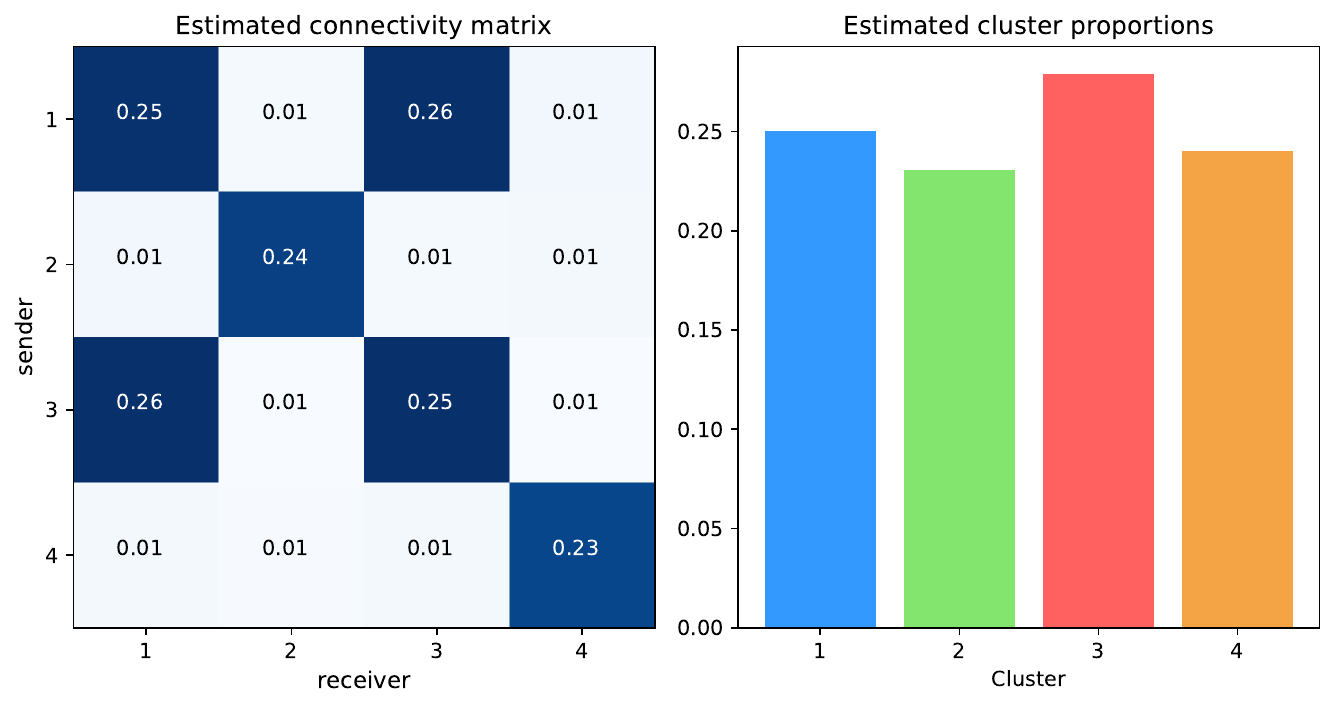}
	\caption{On the left hand side, the expected posterior estimate of the connectivity matrix $\pi$ provided by ETSBM. On the right hand side, the expected posterior estimate of the cluster proportions $\gamma$. The graph was generated following Scenario $C$.}
	\label{fig:sc_C_pi_gamma}
\end{figure}

Eventually, the topics learnt as well as the clustering results on the network are presented in Figure \ref{fig:sc_C_topics_network}. In the network representation, the node colours correspond to the cluster memberships while the edge colours indicate the most used topic in the corresponding documents. Moreover, for each topic $t_k$ with $k \in \{1, 2, 3\}$, the 10 words with the highest probabilities, according to the corresponding topic vector $\beta_k$, are displayed. The three topics presented are well-separated and can be identified as the topics dealing respectively with astronomy, the political landscape in the UK, and the UK monarchy, as expected. In addition, four node clusters have been retrieved and the edge topics, or colours, match the description of the Scenario $C$ setup. To conclude, ETSBM successfully render both the network topology and the edge topics.
\begin{figure}[ht]
	\centering
	\includegraphics[width=\textwidth]{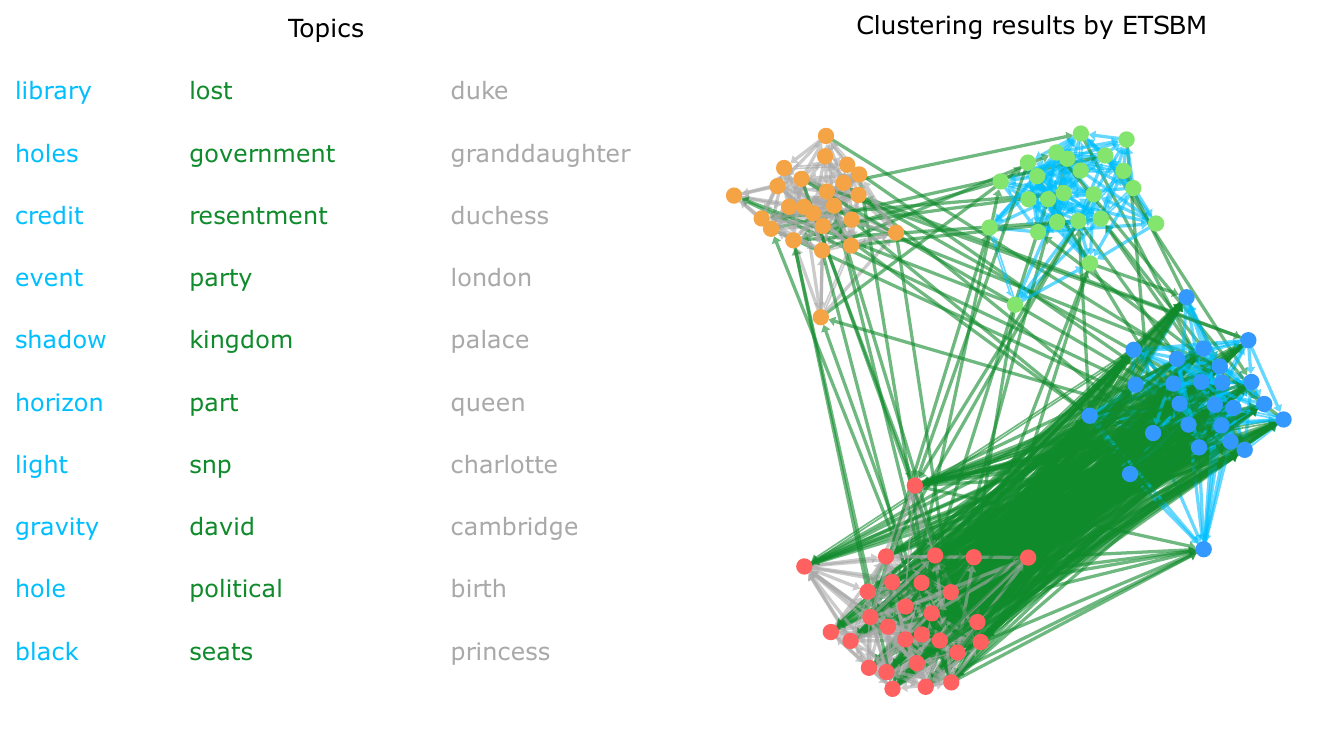}
	\caption{On the left hand side, the top 10 words of each topic according to ETSBM results. Thus, for each topic $t_k$ with $k \in \{1, 2, 3\}$, the 10 words with the highest probability values, according to the corresponding topic vector $\beta_k$, are displayed. On the right hand side, ETSBM clustering result is illustrated. The node colours indicate the node clusters while the edge colours correspond to the most used topic within the document.}
	\label{fig:sc_C_topics_network}
\end{figure}

Finally, Figure \ref{fig:sc_C_meta_graph} provides a high level representation of the results. On the one hand, the ``meta-nodes'' represent  ETSBM clusters and their size is proportional to the number of nodes assigned to the corresponding clusters. Moreover, the ``meta-node'' colours are consistent with the colours in Figure \ref{fig:sc_C_topics_network}. On the other hand, the edges represent the meta-documents. We recall that they correspond to the expected posterior estimate of a document for a given pair of clusters. The edge colours correspond to the most used topic within the meta-document. The edge widths are determined by the posterior probabilities of connections between pairs of clusters. This figure underlines ETSBM capability to produce intelligible and accurate data summary. We emphasise that graphs with thousands of edges, that sometimes cannot be represented because of memory issues, are here able to be summarised in easy-to-read meta-graphs. 
 
\begin{figure}[ht]
	\centering
	\includegraphics[width=0.6\textwidth]{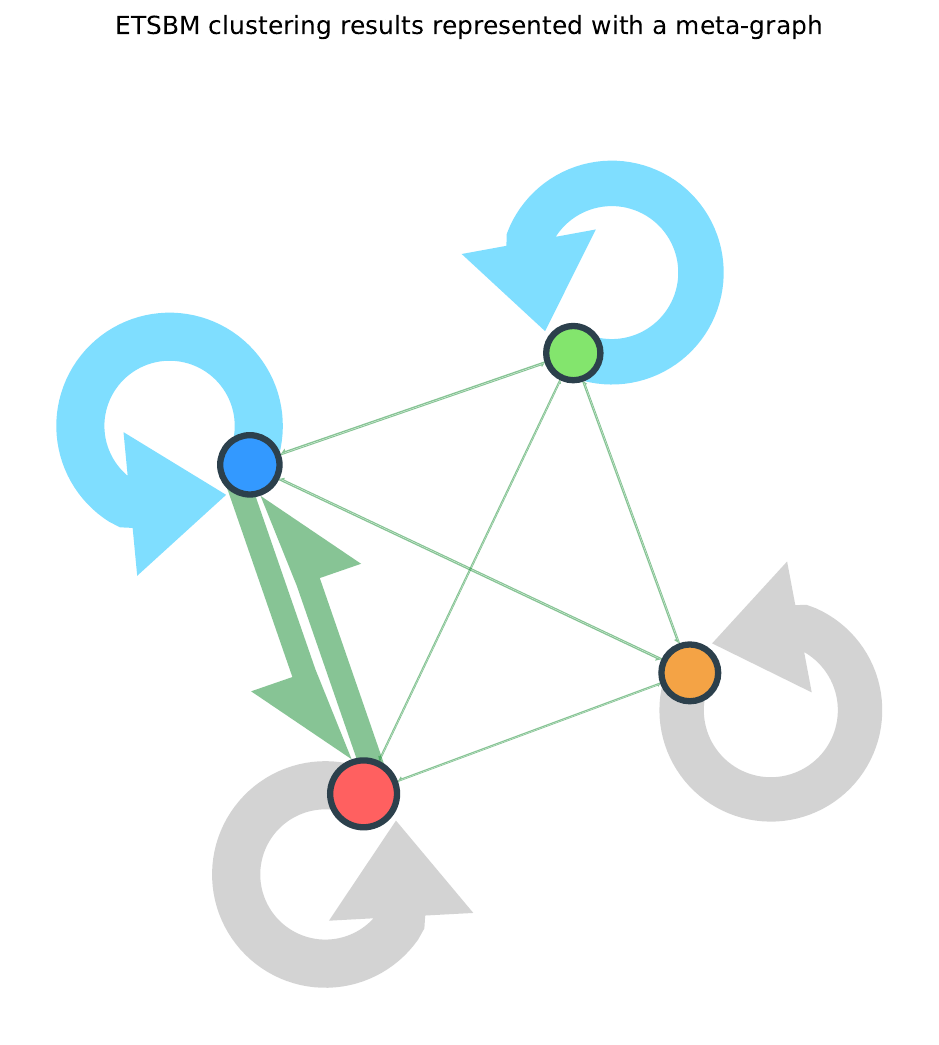}
	\caption{Meta representation of ETSBM results. On the one hand, the clusters are represented by the node colours, the node widths are proportional to the expected posterior estimate of the cluster proportions, and their colours correspond to the same cluster colours as in the network in Figure \ref{fig:sc_C_topics_network}. On the other hand, the edges are coloured as the most used topic within the meta-document and the widths are proportional to the posterior probabilities of connections between clusters. }
	\label{fig:sc_C_meta_graph}
\end{figure}

To conclude, this introductory example showed the ETSBM capacity to render meaningful summaries by combining both network and text information. It is worth reminding that, since it comes from Scenario $C$, those results could not have been retrieved with models handling only network or texts as SBM, LDA or ETM.

\subsection{Effect of the initialisation}\label{seq:initialization}
This experiment aims to evaluate the impact of the initialisation on the final performance of our methodology. The networks are generated according to the \textit{Hard 2} difficulty, to easily visualise the differences between the tested configurations. Moreover, the experiment is performed on Scenario $C$ to ensure both the network and textual data are used. Three different initialisations are compared:  clusters may be randomly assigned to the nodes (random), or initial clusters can be determined by a K-Means algorithm fitted on the adjacency matrix $A$. Finally, the dissimilarity procedure proposed in \cite{bouveyron2018stochastic} is evaluated as the last initialisation strategy (dissimilarity). It uses both network and textual information to build a similarity matrix based on the topics discussed between nodes. Then, a K-means algorithm is performed on this similarity matrix to find a cluster allocation for each node. This initialisation strategy requires to provide the topic proportion of each edge. Thus, ETM is trained on the texts and the estimated topic proportions $(\theta_{ij})_{(i, j) \in \mathcal{E}}$ are used for the dissimilarity initialisation. Figure \ref{fig:init_impact} presents the ARI results with, for each initialisation strategy, a boxplot of the raw initialisation and of ETSBM clustering.
\begin{figure}[ht]
	\centering
	\includegraphics[width=0.9\textwidth]{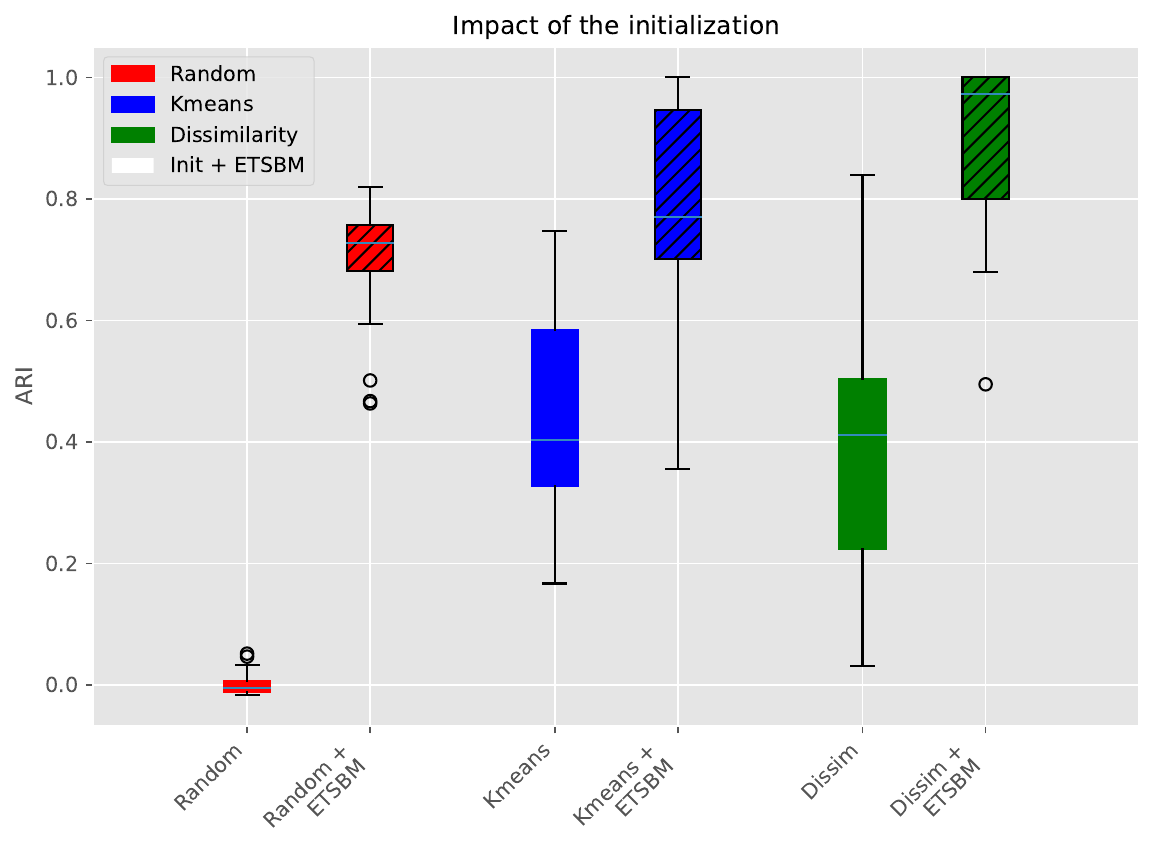}
	\caption{This figure displays the boxplots of the initialisation ARI (the boxplot without stripe) and of ETSBM clustering ARI with the same initialisation (the boxplot with stripes). This experiment was performed on 50 networks generated following Scenario $C$ in the \textit{Hard 2} setting.}
	\label{fig:init_impact}
\end{figure}

 While the random initialisation is close to 0 for ARI, both the K-means and the dissimilarity initialisation fluctuates in terms of ARI, with no clear advantage for one of the two strategies. However, ETSBM provides much better results with the dissimilarity initialisation than with K-means. It is also worth noticing that the gap between the random and K-Means initialisations has largely been closed by ETSBM algorithm. One possibility is that the model suffers the same flaws as SBM, which is for the ELBO to fall into local minimum. It is possible that the use of texts in the dissimilarity limits this effect. Therefore, we will only use the dissimilarity initialisation in the rest of the paper as it provides the best results in most cases.

\subsection{Model selection}
This experiment aims to assess the efficiency of the model selection criterion, presented in Section \ref{sec:model_selection}. Let us remind that we do not aim at selecting the number of topics $K$ since it is handled afterwards. As a consequence, the model selection criterion is evaluated for different values of $K$ to ensure that the performances remain high, in all cases. For each scenario, 50 networks are sampled following the setup described in Section \ref{seq:simulation_scenarii}. For each network, ETSBM parameters are estimated taking the best initialisation out of 10. Table \ref{table:selec_mod_sc_C} presents the percentage of time a number $Q$ is selected using the strategy proposed in  Section \ref{sec:model_selection} over the 50 networks, for each $K$ value. It is worth noticing that the right model is selected more than $75\%$ of the time, except for the Scenario $B$ with $K=5$, slightly bellow with $68\%$. In addition, as advocated before, for $K=10$, the right model is selected more than $80\%$ of the time in each scenario. This experiment illustrates the capacity of the model selection criterion to retrieve the number of clusters. Moreover, keeping a high value of $K$ is confirmed to be compatible with an efficient cluster number selection. 

\begin{table}
	\begin{tabular}{lrrrrr@{\hskip 0.5in}rrrrr@{\hskip 0.5in}rrrrr}
		\toprule
		\multirow{2}{*}{\diagbox{K}{Q}} & \multicolumn{5}{c}{Scenario $A$} & \multicolumn{5}{c}{Scenario $B$} & \multicolumn{5}{c}{Scenario $C$}\\
		&  2  &  $\mathbf{3}$  &  4  &  5  &  10 & $\mathbf{2}$  &  3  &  4  &  5  &  10 & 2  &  3  &  $\mathbf{4}$  &  5  &  10 \\
		\midrule
		2  &   0 &  \textbf{94} &   6 &   0 &   0 &  \textbf{74} &  24 &   2 &   0 &   0 &   0 &   0 &  \textbf{92} &   8 &   0\\
		3  &   0 & \textbf{ 90} &  10 &   0 &   0 &  \textbf{78} &  18 &   4 &   0 &  0 &   0 &   0 &  \textbf{90} &  10 &   0 \\
		4  &   0 &  \textbf{78} &  20 &   2 &   0 &  \textbf{76} &  20 &   4 &   0 &   0&   0 &   0 &  \textbf{94} &   6 &   0\\
		5  &   0 &  \textbf{86} &  14 &   0 &   0 &  \textbf{68} &  28 &   4 &   0 &   0 &   0 &   0 &  \textbf{84} &  16&   0\\
		10 &   0 &  \textbf{88} &  10 &   2 &   0 &  \textbf{82} &  18 &   0 &   0 &   0 &   0 &   0 &  \textbf{86} &  14&   0\\
		\bottomrule
	\end{tabular}
	\caption{This table presents the percentage of time a number of clusters have been selected on 50 simulated networks. The experiment is repeated for different values of $K$, and for Scenario $A$, $B$ and $C$. For instance, in Scenario $A$ with $K=3$, the model with $Q=3$ clusters was selected in 90$\%$ of cases. }
	\label{table:selec_mod_sc_C}
\end{table}

\subsection{Benchmark study}\label{seq:benchmark}
To end this section, ETSBM is evaluated against state of the art clustering algorithms for STBM.
We recall that STBM is currently the only algorithm capable of simultaneously analysing the texts on the edges as well as the node connections to cluster the nodes. In order to provide baselines, we also give the results obtained with SBM as well as a spectral clustering algorithm (SC) presented in \cite{shi2000normalized, von2007tutorial}, with a radial basis function as a kernel and a normalised symmetric Lagrangian. Those methods are evaluated on the three levels of difficulty presented in Section \ref{level_diff}.  Besides, results for LDA as well as ETM for text clustering are also provided. For each level of difficulty and each scenario, Table \ref{tab:benchmark} displays the mean and the standard deviation of the ARI values obtained over 50 graphs. Both the node and edge clusters ARI are provided but we recall that the main interest of this work concerns the node clustering performances. In the \textit{Easy} and \textit{Hard 1} settings, the ARI is always 1, which indicates that the true partitions are successfully retrieved by ETSBM and STBM. On the contrarty, SBM and SC are not able to distinguish clusters in Scenario $B$ since all nodes connect one another with the same probability. Identically, in Scenario $C$, SBM and SC alone cannot differentiate the nodes highly connected but discussing of different topics. For instance, in the \textit{Easy} case, this translates into an ARI of $0.01$ and $0.69$ respectively for SBM, and $0.00$ and $0.63$ respectively for SC. In the \textit{Hard 2} setting, ETSBM node clustering significantly outperforms STBM. In particular in Scenario $C$, \textit{Hard 2}, ETSBM results reach an ARI of 0.91 against 0.63 for STBM. Even though it is not the main focus of this work, the edge ARI is always higher than $0.84$, which is satisfactory, and is competitive when not higher than STBM. These significant gaps in the noisy settings highlight ETSBM clustering improvement upon STBM. To conclude, our experiments strongly indicates that ETSBM node clustering performances are either the same or significantly better than STBM.

\begin{table}[t]
\begin{adjustwidth}{-1cm}{}
	\small
	\ra{1.3}
	\centering
	\caption{Benchmark of our model against STBM, SBM, SC and LDA.  When a model does not provide an information, a line is displayed instead of the result. For instance, SBM does not provides edge information.}
	\label{tab:benchmark}
	\begin{tabular}{llcccccc}
		\toprule
		&     & \multicolumn{2}{c}{Scenario $A$} & \multicolumn{2}{c}{Scenario $B$} & \multicolumn{2}{c}{Scenario $C$} \\
		&     &          Node ARI &          Edge ARI &          Node ARI &          Edge ARI &          Node ARI &          Edge ARI \\
		\midrule
		\multirow{5}{*}{\rotatebox{90}{Easy}} & ETSBM &  \textbf{1.00 $\pm$ 0.00}  & \textbf{0.99 $\pm$ 0.03}  &  \textbf{1.00 $\pm$ 0.00}  & \textbf{ 1.00 $\pm$ 0.00}  &  \textbf{1.00 $\pm$ 0.00}  &  \textbf{1.00 $\pm$ 0.00}  \\
		& STBM &  0.98 $\pm$ 0.04  &  0.98 $\pm$ 0.04  &  \textbf{1.00 $\pm$ 0.00}  &  \textbf{1.00 $\pm$ 0.00}  &  \textbf{1.00 $\pm$ 0.00}  &  \textbf{1.00 $\pm$ 0.00}  \\
		& SBM &  1.00 $\pm$ 0.00  &    ------  &  0.01 $\pm$ 0.01  &    ------  &  0.69 $\pm$ 0.07  &    ------  \\
		
	   & SC &  0.97 $\pm$ 0.07  &    ------  &  0.00 $\pm$ 0.01  &    ------  &  0.63 $\pm$ 0.11  &   ------  \\
		
		& LDA &    ------  &  0.97 $\pm$ 0.06  &    ------  &  1.00 $\pm$ 0.00  &    ------  &  1.00 $\pm$ 0.00   \\ 
		& ETM & ------ & 0.96 $\pm$ 0.14 &  ------ & 1.00 $\pm$ 0.00 & ------ & 1.00 $\pm$ 0.00 \vspace{7pt} \\
		
		\multirow{5}{*}{\rotatebox{90}{Hard 1}} & ETSBM &  \textbf{1.00 $\pm$ 0.00}  &  \textbf{0.95 $\pm$ 0.03}  & \textbf{ 1.00 $\pm$ 0.00}  & \textbf{ 1.00 $\pm$ 0.00}  &  \textbf{1.00 $\pm$ 0.00}  &  0.97 $\pm$ 0.04  \\
		& STBM & \textbf{ 1.00 $\pm$ 0.00}  &  0.90 $\pm$ 0.13  & \textbf{ 1.00 $\pm$ 0.00}  & \textbf{ 1.00 $\pm$ 0.00}  &  \textbf{1.00 $\pm$ 0.00}  &  \textbf{0.98 $\pm$ 0.03}  \\
		& SBM &  0.01 $\pm$ 0.01  &    ------  &  0.01 $\pm$ 0.01  &    ------  &  0.01 $\pm$ 0.01  &    ------  \\
		
		& SC &  0.00 $\pm$ 0.02  &    ------  &  -0.00 $\pm$ 0.01  &    ------  &  -0.00 $\pm$ 0.01  &    ------  \\
		
		& LDA &    ------  &  0.90 $\pm$ 0.17  &    ------  &  1.00 $\pm$ 0.00  &    ------  &  0.99 $\pm$ 0.01   \\ 
		& ETM & ------ & 0.93 $\pm$ 0.07 &  ------ & 1.00 $\pm$ 0.00 & ------ & 0.98 $\pm$ 0.03 \vspace{7pt} \\
		\multirow{5}{*}{\rotatebox{90}{Hard 2}} & ETSBM &\textbf{  0.98 $\pm$ 0.06}  &  \textbf{0.83 $\pm$ 0.07}  &  \textbf{1.00 $\pm$ 0.00}  &  0.86 $\pm$ 0.03  &  \textbf{0.91 $\pm$ 0.12}  & \textbf{0.84 $\pm$ 0.12}  \\
		& STBM &  0.75 $\pm$ 0.27  &  0.82 $\pm$ 0.22  &  \textbf{1.00 $\pm$ 0.00}  & \textbf{ 1.00 $\pm$ 0.00}  &  0.63 $\pm$ 0.19  &  0.77 $\pm$ 0.15  \\
		& SBM &  0.96 $\pm$ 0.05  &    ------  &  0.00 $\pm$ 0.00  &    ------  &  0.63 $\pm$ 0.11  &    ------  \\
		
		& SC &  0.98 $\pm$ 0.08  &    ------ &  -0.00 $\pm$ 0.01  &  ------  &  0.60 $\pm$ 0.11  &    ------  \\
		
		& LDA &    ------  &  0.77 $\pm$ 0.09  &    ------  &  0.88 $\pm$ 0.02  &    ------  &  0.84 $\pm$ 0.04  \\
		& ETM & ------ & 0.83 $\pm$ 0.08 &  ------ & 0.85 $\pm$ 0.03 & ------ & 0.86 $\pm$ 0.04\\
		\bottomrule
	\end{tabular}
\end{adjustwidth}
\end{table}

\section{Real World example: analysing the French presidential election with a Twitter dataset}\label{sec:real_world_data}

In this section, we now consider the analysis of a real dataset. We start by describing the context of the study. The dataset is then presented and the results obtained with ETSBM are given. To complete this study, the results obtained with SBM and ETM employed independently are also provided. Finally, a comparison of these results with the ones obtained with ETSBM is performed.
\subsection{Context}
 This section presents a use case on a Twitter dataset dealing with the French presidential election of 2022. The election resulted in Emmanuel Macron being re-elected as President of France. The objective is to use ETSBM to capture the global trends on Twitter before the first round of the French presidential election in April 2022. The network has been constructed using tweets collected by the Linkfluence, a Meltwater company, during a collaboration between journalists of the French newspaper \textit{Le Monde} and two authors of this article \citep{laurent2022gauche}. Newspapers such as \textit{Le Monde} may be interested in having a good understanding of the global dynamics on social media during an electoral period, in order to understand the interest of the public opinion. Thus, interpretable topics and meaningful clusters may help them getting a grasp on the core factors interesting the elector. During the last 50 years, French political landscape has been split between two main parties, the left-democrat, mainly represented by the socialist party, and the right-liberal, represented by \textit{Les Républicains} (formerly UMP). A shift occurred in 2017 when a three-way split between the far-left political families, the centrists, or liberals, and the far-right emerged. This  analysis aims at capturing the major topics discussed prior to the election. In addition, we want to understand the way those topics shape user groups interactions. However, this study does not aim at making any form of prediction about the election.
	
\subsection{Dataset construction and method}
 In the collected data, each node represent a Twitter account. An account $i$ is connected to $j$ if the former retweeted the later or if $i$ ``mentioned'' $j$ with an ``@account$\_$name" in a tweet. The text on the edges are the tweet themselves. Our database has been created by saving any tweet talking about one of the twelve candidates. If several tweets appear from $i$ to $j$, the edge $(i, j)$ holds all those tweets stack together. We only keep edges with text length greater than 100 characters. Then, a lemmatisation procedure is used to reduce the vocabulary size. The ``stopwords'', defined as non-informative words such as ``and'' or ``it'', are withdrawn, as well as numeric characters and words with a length inferior to 3 characters. In the end, we keep the largest connected component of this graph. Our dataset holds $2,730$ nodes and $403,768$ edges. This means that the graph is sparse at $94.58 \%$. We emphasise that this level of sparsity is quite high and makes the data analysis particularly challenging. The number of topics is set to $K=20$. Also, for each $Q$ value, the model is trained for 10 different initialisations and the best result among those 10, ELBO wise, is kept. Then, the number of clusters is selected using our model selection criterion. Figure \ref{fig:real_data_selection_model} shows that the most appropriate model according to our criterion corresponds to a number of clusters $Q = 5$.
\begin{figure}[ht]
	\centering
	\includegraphics[width=0.7 \textwidth]{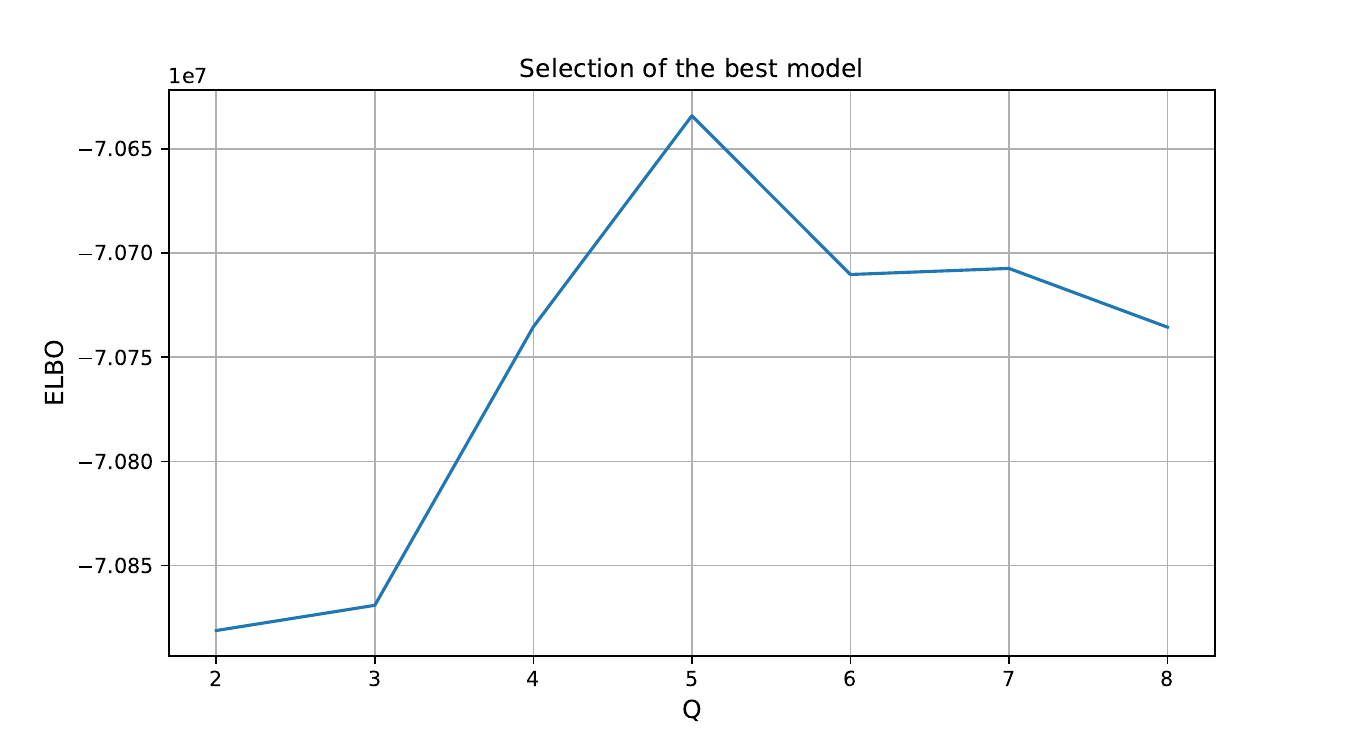}
	\caption{After running ETSBM with different number of clusters $Q$, the ELBO suggest to keep five clusters.}
	\label{fig:real_data_selection_model}
\end{figure}

\subsection{Results}
The meta-graph presented in Figure \ref{fig:meta_graph} is a high-level representation of the network. The ``meta-nodes'' correspond to ETSBM clusters and the edges to the meta-documents as defined in Equation \eqref{eq:expected_meta_docs}. A translation of the top words is provided in Appendix \ref{fig:meta_topics_english}. It is interesting to note the two types of clusters uncovered. In particular, Cluster 5 is composed of central accounts such as French politicians and their communication teams, for instance \textit{Jean-Luc Mélenchon},\textit{ Guillaume Peltier}, \textit{En Marche \#avecvous}, \textit{les Républicains} or \textit{Eléonore Lhéritier}. Some popular French media such as \textit{BFMTV}, \textit{Le Figaro}, \textit{Valeurs actuelles}, \textit{franceinfo} are also in this cluster. On average, the accounts in this cluster have been retweeted or mentioned 299 times against 12 times for the whole network. This cluster does not correspond to a political trend but to accounts with a high level of interactions with the rest of the graph. Despite the small size of this cluster, composed of 25 nodes, ETSBM is able to detect it and to render its central function as a relay of information to other parts of the graph. This is stressed by Topic 1, the main topic discussed within Cluster 5. It regards the election as a democratic process: ``round", ``vote", ``power", ``president", ``first" which we assume stands for ``first round". This core cluster is retweeted differently by the four other clusters which on the contrary hold clear political trends. Cluster 2 and Cluster 3 are interested in \textit{Jean-Luc Mélenchon} (Topic 2) and left parties in general (Topic 4) but they seem to differ in terms of function. Cluster 2 clearly relays information about \textit{Jean-Luc Mélenchon} and is interacting with Cluster 4, interested in \textit{Eric Zemmour}. On the contrary, Cluster 3 seems to only relegate contents without being retweeted. Eventually, Cluster 4, interested in Eric Zemmour (Topic 5), appears to relegate contents from the central accounts as well as sharing many of its own content. This dynamic differs from Cluster 1 interested in \textit{Emmanuel Macron} (Topic 3), which mainly retransmits informations without many self interactions. 
To conclude, the three-way split of the French political landscape is rightfully captured. ETSBM is also able to detect subtleties such as a split within the left-wing, with the orange cluster interested only in \textit{Jean-Luc Mélenchon} and the biggest one exchanging about different left-political front runners, \textit{Jean-Luc Mélenchon}, \textit{Yannick Jadot},  \textit{Fabien Roussel} and \textit{Anne Hidalgo}. ETSBM combines the connection information, for instance all clusters are connected to Cluster 5, and the topics information, for instance Cluster 2 and Cluster 3 should be separated, to provide relevant insights about the information organisation within the social network. This level of detail is promising and highlights how ETSBM gives a better comprehension of the complex dataset at our disposal. 

\newpage
\newgeometry{bottom=0mm} 
\begin{figure}
	\centering
	\begin{subfigure}{\textwidth}
		\centering
		\includegraphics[width=0.75\textwidth]{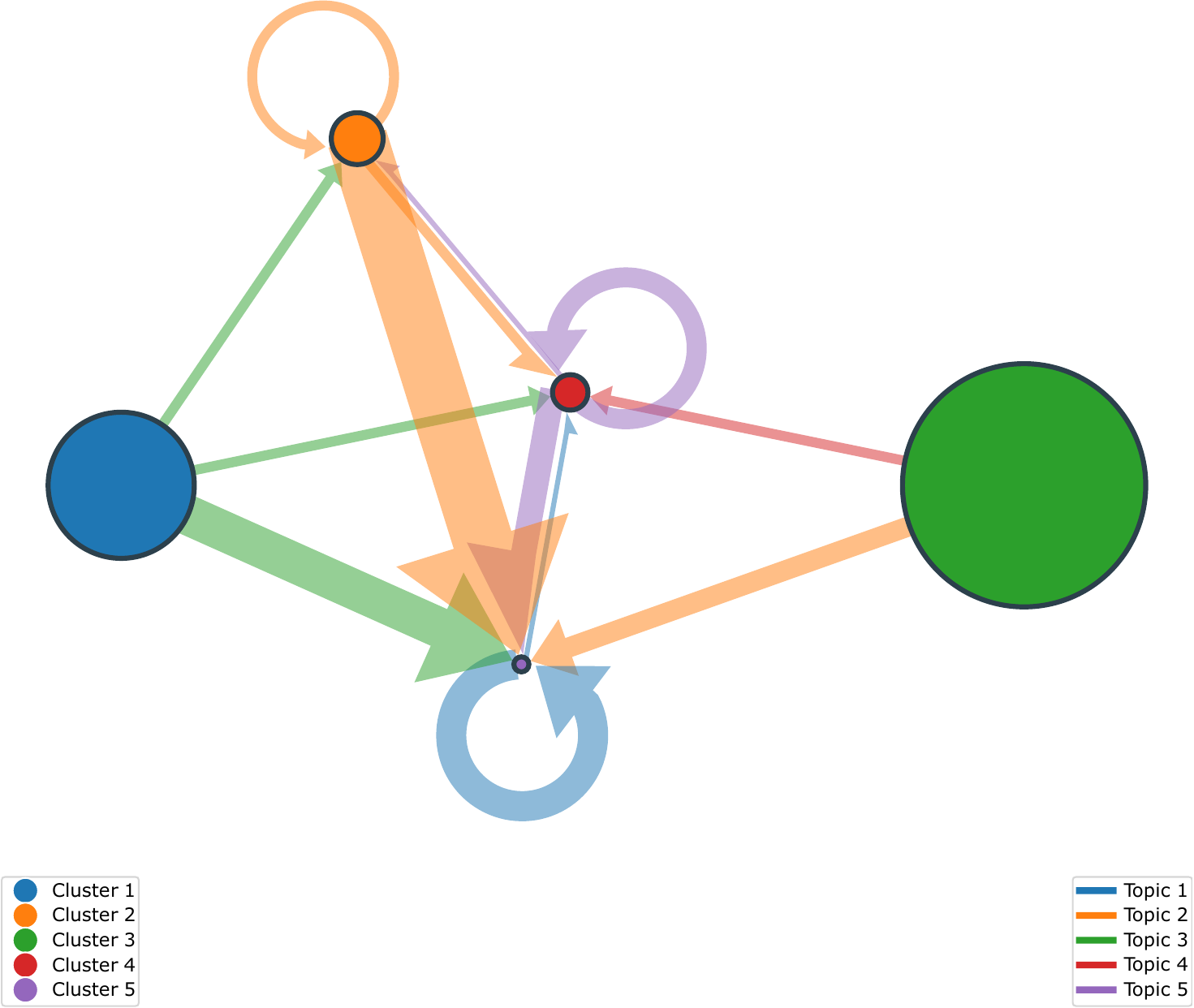}
	\caption{ Meta-network obtained with ETSBM. Each node corresponds to a cluster and the node widths are proportional to the posterior cluster proportions. On the other hand, the edges are coloured as the most used topics within the meta-documents and the widths are proportional to the posterior probabilities of connections between clusters. } 
		\label{fig:meta_graph}
	\end{subfigure}
	\vfill
	\vspace*{0.5cm}
 \begin{subfigure}{\textwidth}
	\centering
	\includegraphics[width=0.9 \textwidth]{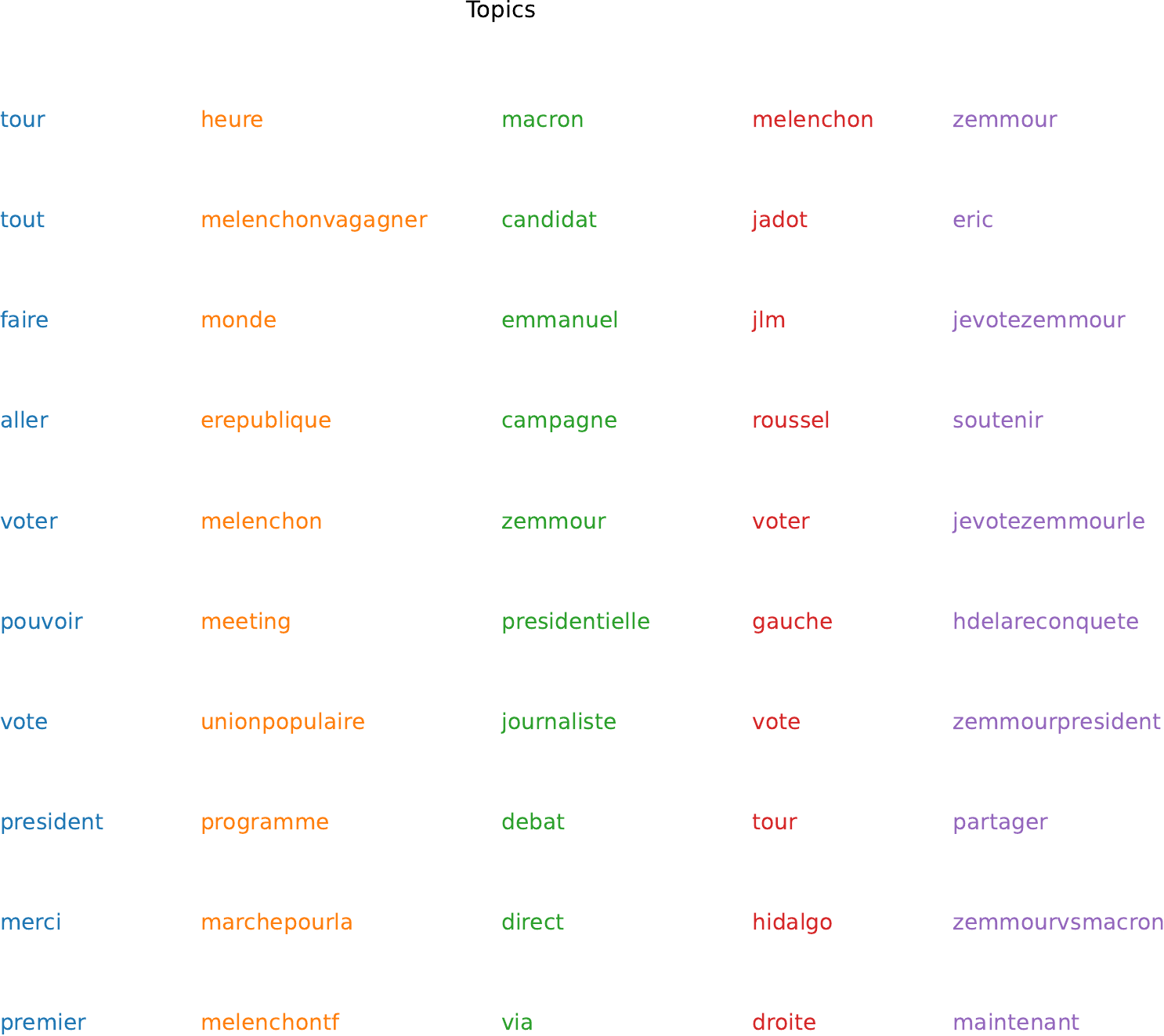}
	\caption{The most important words of the topics presented in the meta-graph above for ETSBM. A translation is provided in Figure \ref{fig:meta_topics_english} of the appendix.}
	\label{fig:meta_topics}
\end{subfigure}
\caption{ETSBM results on the Twitter dataset for $Q=5$ clusters.}
\end{figure}
\restoregeometry

\newpage
\newgeometry{bottom=0mm} 
\subsection{ Comparison with SBM and ETM fitted independently}
\begin{figure}[H]
	\centering
	\begin{subfigure}{\textwidth}
		\centering
	\includegraphics[width=0.7\linewidth]{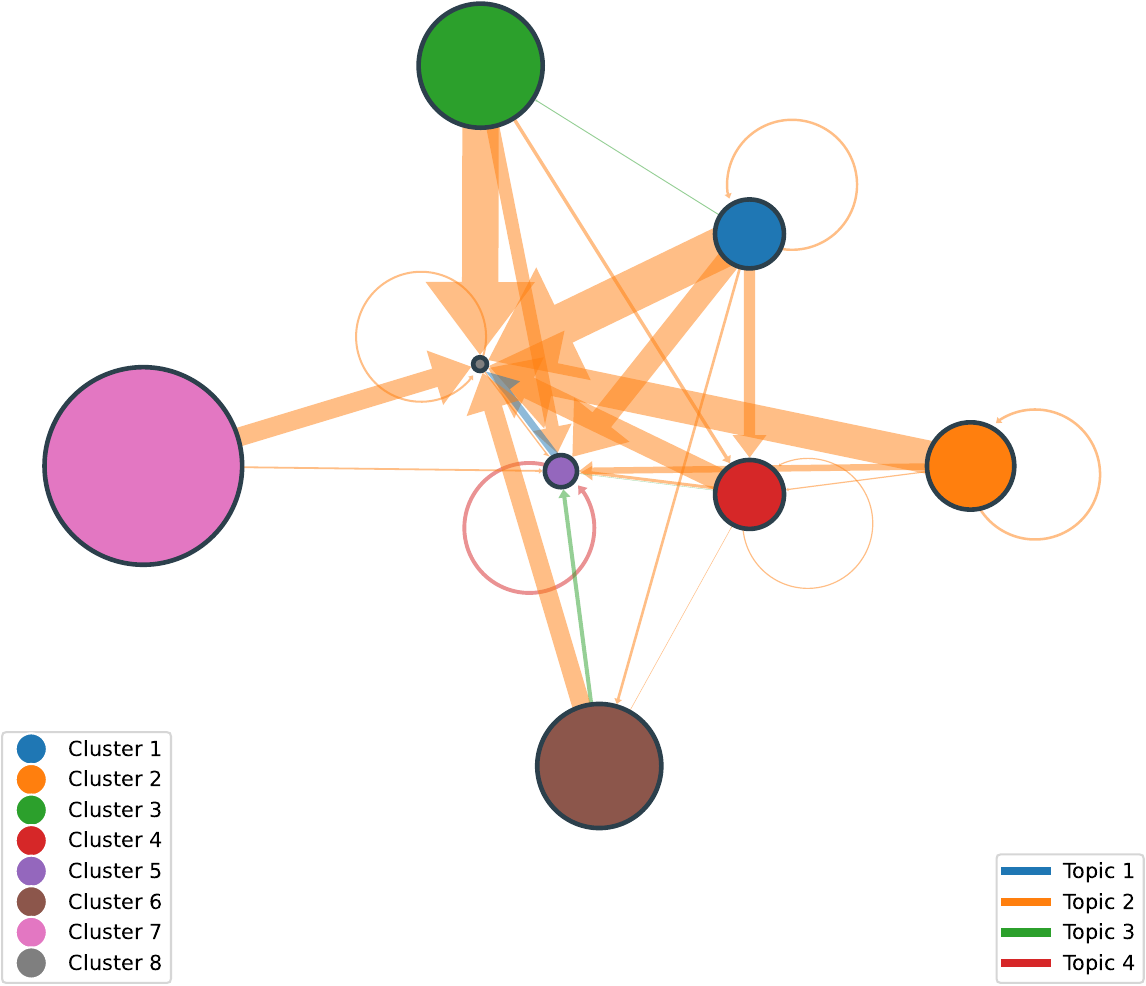}
	\caption{Meta-network estimated with SBM. Each node corresponds to a cluster and the node widths are proportional to the cluster proportions. On the other hand, the edges are coloured as the most used topics of the documents exchanged between the pairs of clusters found by SBM alone. Such topics are obtained by applying ETM alone. The widths of the edges are proportional to the probabilities of connections between clusters.}
	\label{fig:meta_network_sbm_etm}
	\end{subfigure}
	\vfill
	\vspace*{0.5cm}
\begin{subfigure}{\textwidth}
\centering
	\includegraphics[width=0.9\linewidth]{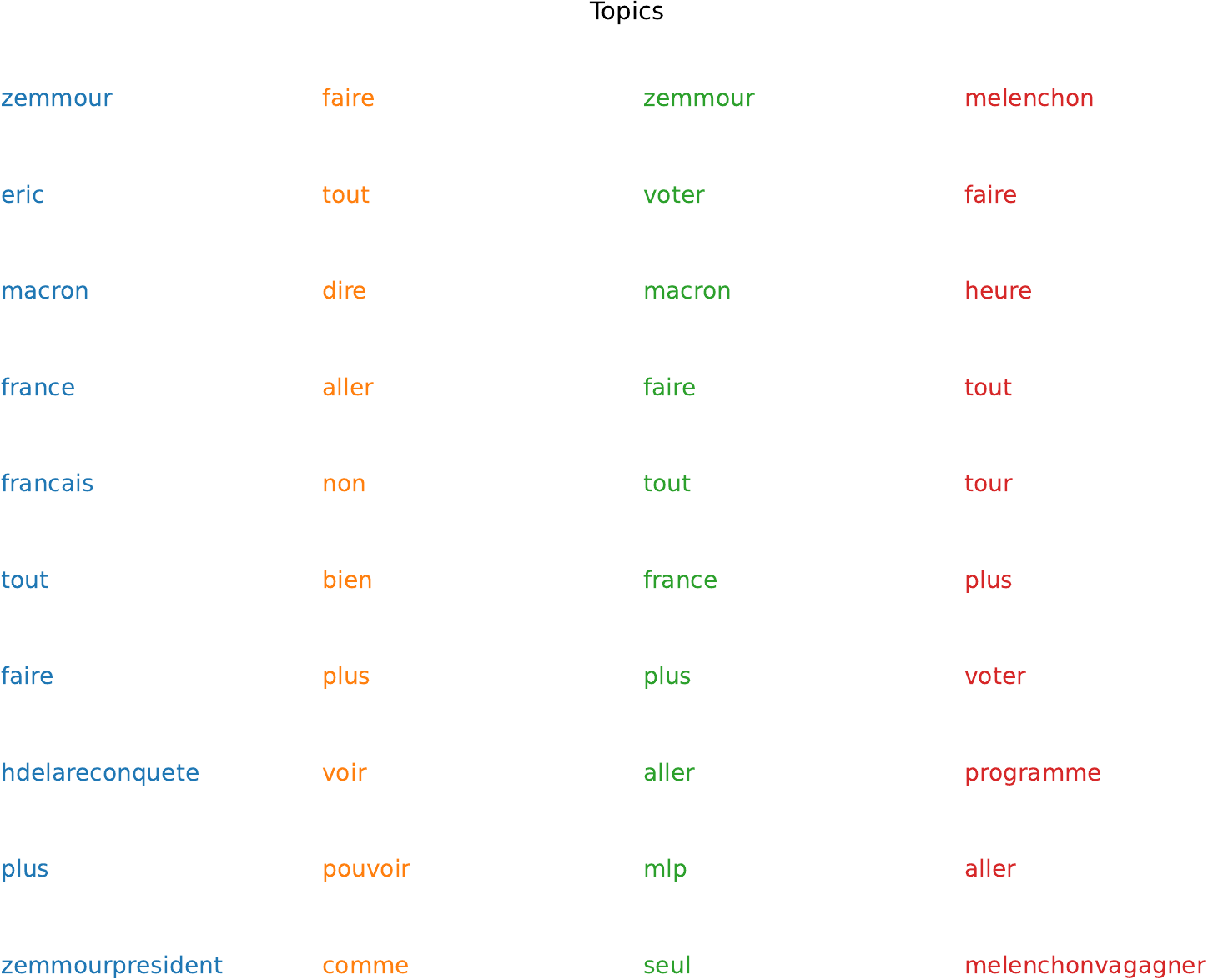}
	\caption{Meta-topics estimated with ETM on the Twitter dataset. A translation is provided in Figure \ref{fig:meta_topics_etm_sbm_english} of the appendix.}
	\label{fig:meta_topics_sbm_etm}
\end{subfigure}
\caption{SBM and ETM results on the Twitter dataset for $Q=8$ clusters.}
\label{fig:meta_results_sbm_etm}
\end{figure}
\restoregeometry

 \paragraph{Description of the results}
 We now give the results obtained using SBM and ETM independently on the Twitter dataset in Figure \ref{fig:meta_results_sbm_etm}. The number of topics is set to $K=20$ again, but only the ones appearing in the meta-graph are presented.  As in the previous section, we restrict the search of the number of clusters between 2 and 8 to keep the results easily interpretable and to provide a fair comparison with ETSBM. The ICL criterion selects a number of clusters $Q=8$ which is the maximum value considered. SBM detects a central cluster in terms of connectivity of the graph (cluster 8), such that all other clusters are connected to it. It is composed of two accounts, the BMFTV account as well as Jean-Luc Mélenchon account. Most connections are dealing with Topic 2, which is very general but not informative.
 
 \paragraph{Comparison with ETSBM results}

The topics in Figure \ref{fig:meta_results_sbm_etm} do not provide much information to understand the content of the connections in the network. In particular, Topic 2, which is general and not specific, is the most used topic in the meta-network. This can be explained by the independence between the construction of the clusters and of the content of the tweets. Therefore, the meta-documents exchanged between clusters have no reason to be specific or to share a common topic. As a result, the Topic 2 emerges as the most used topic between clusters. Compared to ETSBM results, the connections are not informative and the topics exchanged are too general to be considered for interpretation. We emphasise that among the 20 topics estimated by ETM, some are very informative but do not emerge in the meta-graph, backing the claim that the clusters are not meaningful. In addition, the number of clusters selected by the ICL ($8$), is higher than the number of clusters selected by ETSBM ($5$). Having a low number of clusters can help make the results easier to understand.

\section{Conclusion and discussion} \label{sec:conc_disc}

The embedded topics for the stochastic block model (ETSBM) is well suited to simultaneously find meaningful node and edge clusters. In addition, ETSBM provides an intelligible high-level representation of the graph. It can be used both on directed and undirected graphs and is suited for large datasets thanks to the variational inference. The numerical experiments showed that the ELBO is a relevant model selection criterion to estimate the number of node clusters $Q$ in this Bayesian framework. Moreover, this  criterion keeps provide a good estimate of $Q$ for a high number of topics $K$. In the end, a use case on a Twitter dataset proved the usefulness of the method. ETSBM clustering results were both meaningful and humanly intelligible. Further work may be directed in the study of theoretical foundations of the model selection criterion proposed. Adding temporal information concerning the connectivity patterns and the topics modelling could also contribute to obtain useful information on the data.

\bibliography{biblio}

\newpage
\appendix

\section{Inference}
\begin{proof}[Proof of Proposition \ref{prop:elbo_decomposition}]\label{appendix:elbo_decomp}
	
The ELBO can be decomposed as follow:
\begin{align*}
	\log  p(A, W \mid \alpha, \rho) & = \mathbb{E}_{R} \left[ \log  p(A, W \mid \alpha, \rho)\right] \\
	& = \mathbb{E}_{R} \left[  \log \frac{p(A, W, Y, \pi , \gamma, \delta  \mid \alpha, \rho)}{p(Y, \pi , \gamma, \delta \mid A, W, \alpha, \rho)}   \right] \ \tag*{applying Bayes rule}\\
	& = \mathbb{E}_{R} \left[  \log \frac{p(A, W, Y, \pi , \gamma, \delta  \mid \alpha, \rho)}{R(Y, \pi , \gamma, \delta)} + \log \frac{R(Y, \pi , \gamma, \delta)}{p(Y, \pi , \gamma, \delta \mid A, W, \alpha, \rho)}   \right] \\
	& =  \mathscr{L} (R(\cdot); \alpha, \rho ) + \KL(R(\cdot) || p(Y, \pi, \gamma, \delta \mid A, W, \alpha, \rho)).
\end{align*}
\end{proof}

\begin{proof}[Proof of Proposition \ref{prop:elbo_parametric_function}]\label{prop:elbo_detail_proof}
\begin{align} \label{eq:elbo_detail}
	\mathscr{L}(R(\cdot); \alpha, \rho ) 	& = \overset{ \mathscr{L}^{net}(\tau, \tilde{\pi}_{qr1}, \tilde{\pi}_{qr2}  \tilde{\gamma} ; \alpha, \rho ) :=}{ \overbrace{\mathbb{E}_{R}\left[ \log \frac{p( W \mid Y, A, \theta, \alpha, \rho ) p(\theta)}{ R(\theta)}\right]}} +  \overset{ \mathscr{L}^{texts}(\tau, \nu ; \alpha, \rho ) :=}{ \overbrace{ \mathbb{E}_{R}\left[ \log \frac{ p(A \mid Y, \pi) p(Y \mid \gamma)  p(\pi) p(\gamma)}{ R(Y) R(\pi) R(\gamma)} \right]}} \nonumber\\
	& = \mathbb{E}_{R}\left[ \log p( W \mid Y, A, \theta, \alpha, \rho ) \right]+  \mathbb{E}_{R}\left[ \log p(\theta) \right] -  \mathbb{E}_{R}\left[ \log R(\theta) \right] \nonumber \\
	& + \mathbb{E}_{R}\left[ \log p(A \mid Y, \pi)  \right] + \mathbb{E}_{R}\left[\log p(Y \mid \gamma) \right] + \mathbb{E}_{R}\left[ \log  p(\pi) \right] + \mathbb{E}_{R}\left[ \log p(\gamma) \right] \nonumber\\
	& - \mathbb{E}_{R}\left[\log R(Y) \right] - \mathbb{E}_{R}\left[ \log R(\pi) \right] - \mathbb{E}_{R}\left[ \log R(\gamma) \right] \nonumber\\
	& =  \sum_{i \neq j }^M \sum_{ q, r}^Q A_{ij}  \tau_{iq} \tau_{jr}  \mathbb{E}_{R}\left[  \underset{T_{ij}^{\delta_{qr}}}{\underbrace{ \log p(w_{ij} \mid \delta_{qr}, \alpha, \rho )}} \right] - \sum_{q,r} \KL( \mathcal{N}(\mu_{qr}(\tau, \nu), \sigma_{qr}(\tau, \nu)) || \mathcal{N}(0, I) ) \nonumber\\
	& + \sum_{i \neq j}^M \sum_{q,r}^Q  \tau_{iq} \tau_{jr}  A_{ij} \left( \psi( \kappa_{qr1}) - \psi( \kappa_{qr2}) \right)  +  \sum_{i \neq j}^M \sum_{q,r}^Q\tau_{iq} \tau_{jr}  ( \psi(\kappa_{qr2} ) - \psi(\kappa_{qr1} + \kappa_{qr2} )) \nonumber \\
	& + \sum_{i=1}^M \sum_{q=1}^Q \tau_{iq} \left( \psi(\gamma_{q}) -  \psi\left(\sum_{q} \gamma_{q}\right) \right)  + \log \mathcal{B}(\mathrm{1}_Q) + \log(\mathcal{B}(a,b) ) \nonumber \\
	&  - \sum_{i=1}^M \sum_{q=1}^Q  \tau_{iq} \log(\tau_{iq})  -  \sum_{q,r} \log \mathcal{B}(\kappa_{qr1}, \kappa_{qr2}) - \log \mathcal{B}(\gamma).
\end{align}
where,
\begin{align}
	T_{ij}^{\delta_{qr}} = \sum_{d=1}^{D_{ij}} \sum_{n=1}^{N_{id}^d} \sum_{v=1}^V w_{ij}^{dnv} \log\left( \sum_{k=1}^K  \theta_{qr k}   \beta_{k v} \right).
\end{align} 
and $\theta_{qr} = \mu_{qr}(\tau, \nu) + \sigma_{qr}(\tau, \nu) \epsilon$, 
$\epsilon \sim \mathcal{N}(0_K, \mathrm{I}_{K})$.

The Kullback-Leibler divergence between two Gaussian variables has a close form and is easy to compute. All the terms can be computed except for the expectation of $	T_{ij}^{\delta_{qr}}$ that can be approximated using a Monte-Carlo estimator, by drawing $S$ samples for each pair $(q,r)$, such that:
\begin{align*}
	\ \ \epsilon^s \sim \mathcal{N}(0,I_{K}),\ \ \ \delta_{qr}^s = \mu_{qr}( \tau, \nu) + \sigma_{qr}( \tau, \nu) \odot \epsilon^s,\ \ \theta_{qr}^s = \softmax(\delta_{qr}^s).
\end{align*}
with $ \odot$ denoting the Hadamard product. Thus, for each pair of nodes $(i,j)$ and pair of clusters $(q,r)$, the estimate is given by: 
\begin{align*}
	\hat{T}_{ij}^{qr} = S^{-1} \sum_{s=1}^S T_{ij}^{\delta^s_{qr}}.
\end{align*}
Plugging $\hat{T}_{ij}^{qr}$ in the Equation \eqref{eq:elbo_detail} gives the final estimator of the ELBO.

\end{proof}
\newpage
Figure \ref{fig:meta_topics_english} provides a translation of topics found by ETSBM on the real dataset and appearing in the meta-network.
\section{Real data}

 \begin{figure}[ht]
	\centering
	\includegraphics[width=\textwidth]{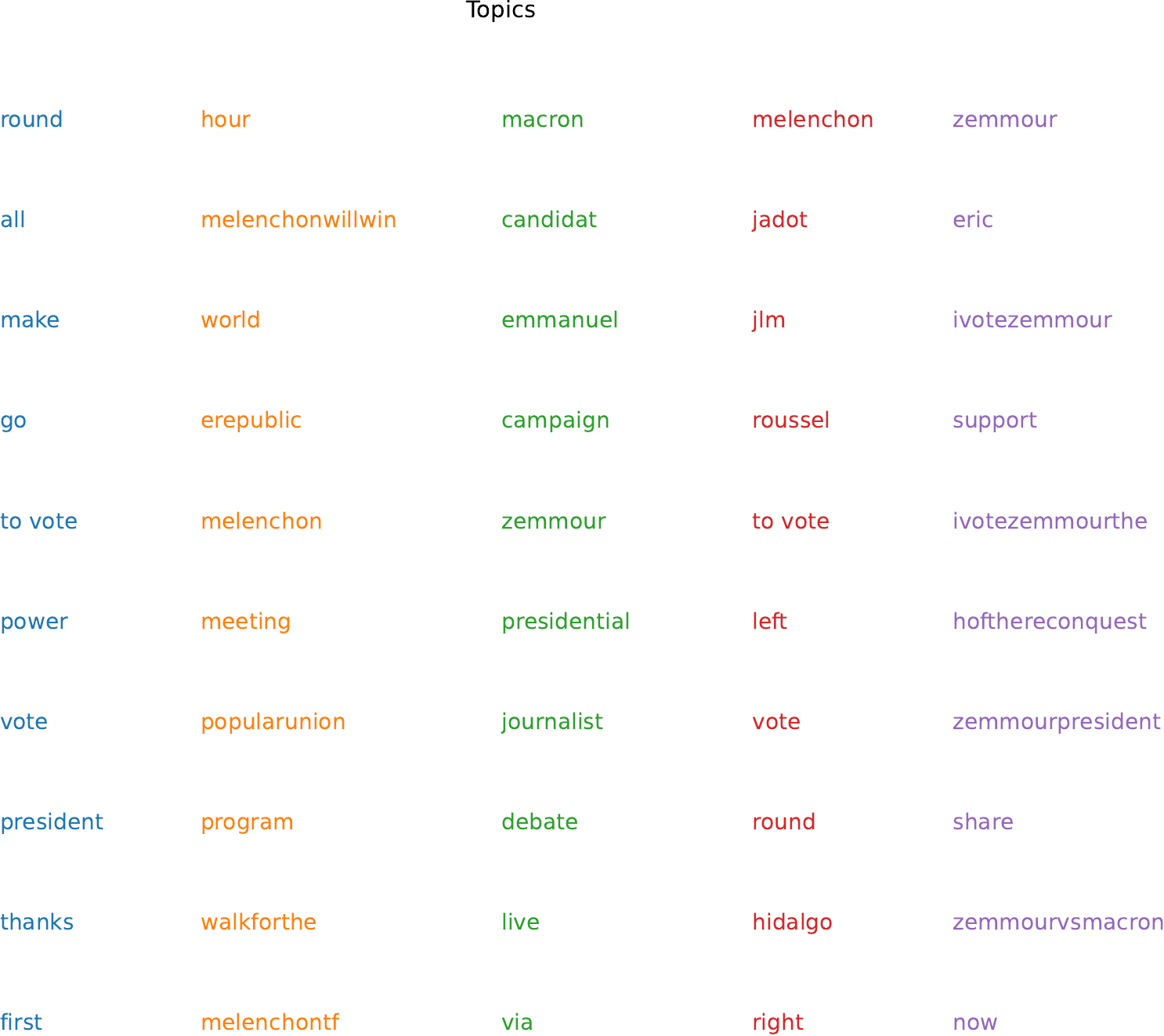}
	\caption{The most important words of each topic present in the meta-graph translated in English.}
	\label{fig:meta_topics_english}
\end{figure}

\newpage
Figure \ref{fig:meta_topics_etm_sbm_english} provides a translation of topics found by ETM on the real dataset and appearing in the meta-network.
\begin{figure}[ht]
	\centering
	\includegraphics[width=\textwidth]{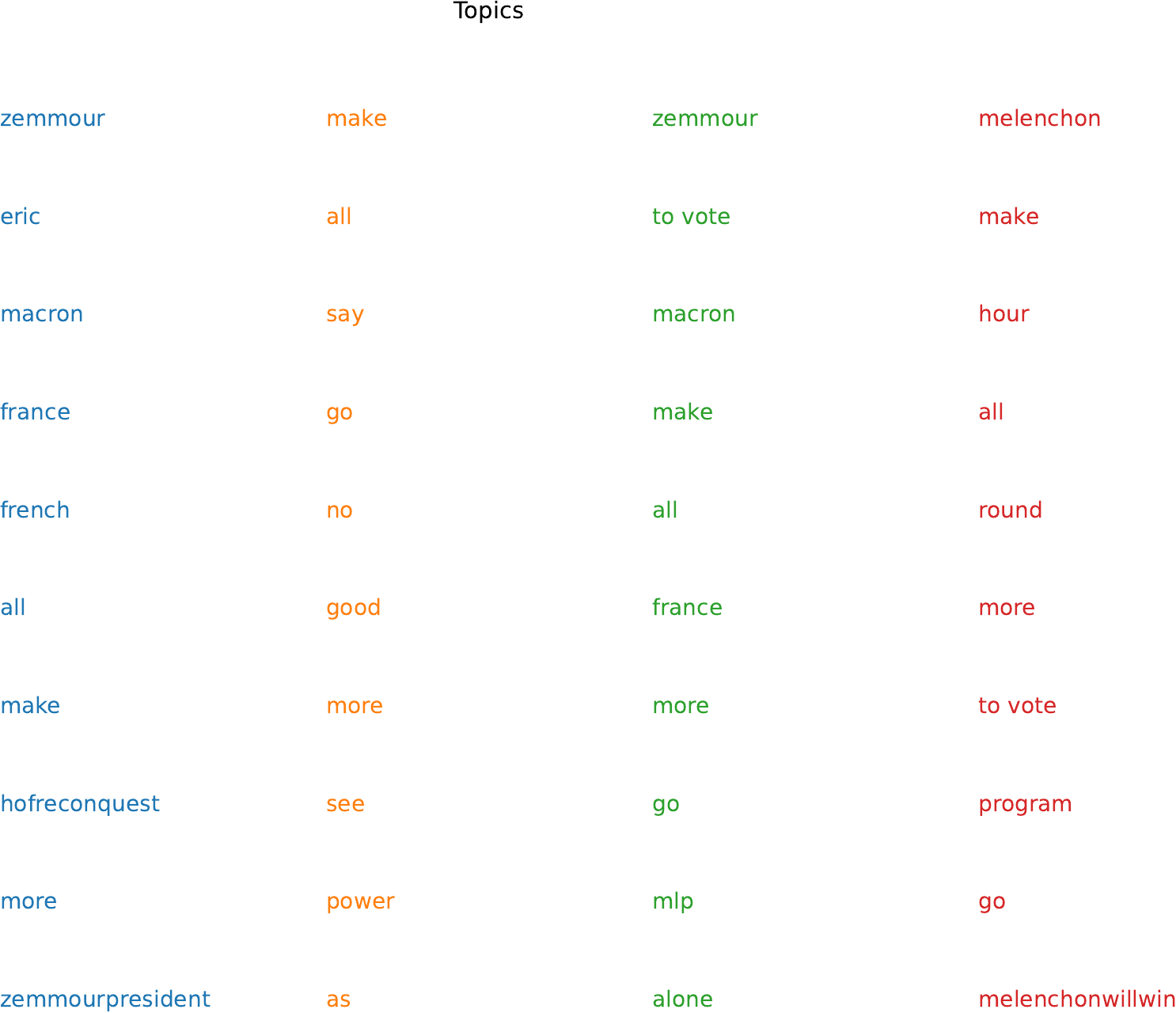}
	\caption{The most important words of each topic present in the meta-graph translated in English.}
	\label{fig:meta_topics_etm_sbm_english}
\end{figure}

\end{document}